\newtheorem{theorem}{Theorem}[section]
\newtheorem{proposition}[theorem]{Proposition}
\newtheorem{lemma}[theorem]{Lemma}
\newcommand{\tabincell}[2]{
 \begin{tabular}{@{}#1@{}}#2\end{tabular}
}
\title{Pricing basket options with the first three moments of the basket: log-normal models and beyond}
\begin{document}
\author{\normalsize{$\text{Dongdong Hu}^{ab}, \text{Hasanjan Sayit}^a,\text{Frederi Viens}^c$}\\ 
\footnotesize{$^a \text{Xi'an Jiaotong Liverpool University, Suzhou, China}$}\\ 
\footnotesize{$^b \text{Yiwu Industrial \& Commercial College, Yiwu, China}$} \\
\footnotesize{$^c \text{Rice University, Houston, Texas, USA}$
}}
\date{September 20, 2021}

\maketitle

\abstract{Options on baskets (linear combinations) of assets are notoriously challenging to price using even the simplest log-normal continuous-time stochastic models for the individual assets. The paper \cite{Borovkova}
gives a closed form approximation formula for pricing basket options with potentially negative portfolio weights under log-normal models by moment matching. This approximation formula is conceptually simple, methodologically sound, and turns out to be highly accurate. However it involves solving a system of nonlinear equations which usually produces multiple solutions and which is sensitive  to the selection of initial values in the numerical procedures, making the method computationally challenging. In the current paper, we take the moment-matching methodology in \cite{Borovkova} a step further by obtaining a closed form solution for this non-linear system of equations, by identifying a unary cubic equation based solely on the basket's skewness, which parametrizes all model parameters, and we use it to express the approximation formula as an explicit function of the mean, variance, and skewness of the basket. Numerical comparisons with the baskets considered in \cite{Borovkova} show a very high level of agreement, and thus of accuracy relative to the true basket option price.

In the second half of the paper, we apply the
same moment-matching approach to the case of a class of time-changed models, ones where the log-normal assets in the basket are subordinated by a common random time change, and we obtain a closed form approximation formula for basket option prices under these models also. Our formula in the time-changed case also involves a system of non-linear equations that needs to be solved. We simplify this system of equations and eventually reduce the basket option pricing problem under these models into
the problem of finding a root of a real valued function on the real line. While this equation, which depends explicitly on the moment-generating function of the time-mixing distribution, need not have explicit solutions, we show that in examples of distributions popular in the literatures, the equation can be of polynomial type, or of exponential type, whose numerical resolutions pose no difficulties.  Our  numerical tests, based on financially realistic parameters, show that such closed form approximations to basket prices for this class of  time changed models  also perform highly accurately.
 }

\section{Introduction}
A Basket call option is a financial derivative that gives the owner the right, but not the obligation, to receive the difference between a weighted sum of multiple assets with a fixed constant strike price at the maturity of the option. Basket options allow to hedge against multiple different risks at the same time and they are traded on a large scale in diverse financial markets and in various different forms. For example, in 
agricultural markets, crush spread options, which cover price risk for the producers of secondary commodities like soybean meal and soybean oil, relative to the price of the raw commodity, are traded in large volumes. In energy markets, spread options are traded in a large variety, such as crack spread options where the raw material is crude oil, and the products can be any combination of gasoline, kerosene, and other refined fuels. In electricity markets, spark spread options and its variants are popularly used in hedging both short-term and long-term cross-commodity risks. 

Pricing basket options with only two underlying assets is already considered a challenging issue. Baskets with three or more assets pose notorious option pricing problems, particularly those of spread type, where at least one of the assets has a negative weight (the raw material), and others have positive weights (the products), because the industrial actor seeks price-risk protection, as they acquire that raw material to produced and then sell the transformed material.

As stated in \cite{Li_Deng_2009}, there is growing demand for pricing basket options with more than two underlying assets, whether of spread type or otherwise. Numerical methods for these basket options become extremely slow and largely inapplicable. Because of these problems, various closed form approximations for pricing basket options have been proposed in the past, both in two-asset and multi-asset basket option cases. Many authors have understood the numerical challenges inherent in spread-type basket options; these authors always attempt to handle baskets featuring  positive and negative weights. To their credit, the various approximation methodologies are typically based on simple calibration principles from distributional simplifications, as these are widely considered to be more robust than other numerical methods, and the known properties of the basket can guide the type of distributional approximation. Here are some examples, first with two-asset spreads, then with multiple assets.

In the two-asset spread option case, Shimko\cite{Shimko_1994} advocates 
to approximate the distribution of the spread by a normal random variable due to the fact that the spread of the two underlying assets can be negative. This approach leads to a simple closed form formula but it doesn't perform well in numerical tests. Then Kirk\cite{Kirk1995} combines the second asset with the fixed strike price into a single asset and applies the Margrabe formula to derive a closed form approximation to two-asset spread options under log-normal models. Kirk's formula is relatively accurate and is currently very popular in energy markets. Two subsequent papers, Carmona and Durrelman\cite{Carmona2003a} and Li et al. \cite{Li_Deng_2008}, obtain closed-form approximations to spread options by approximating the exercise boundary of the spread by a linear function and by a second-order Taylor expansion respectively. Then  Bjerksund and Stensland\cite{Bjerk_2014} follow the approach of Carmona and Durrelman\cite{Carmona2003a}, and derive  a closed form lower-bound formula  for the two-asset spread options under log-normal models. 

For baskets with more than two asset, fewer papers have been devoted to pricing their options under log-normal models, see Carmona and Durrleman \cite{Carmona2003b}, Li et al. \cite{Li_Deng_2008}, and Borovkova et al. \cite{Borovkova}. We mentioned that the paper Carmona and Durrleman \cite{Carmona2003b} gives a lower bound for the basket price by linear approximating of the exercise boundary. They manage this for multi-asset baskets as well. Their approach yields a highly accurate lower bound, but it involves solving a system of nonlinear equations, which has a significant computational cost. The paper Li et al. \cite{Li_Deng_2009} also studies pricing problems of multi-asset basket options under log-normal models. According to their terminology, a regular basket is one where the first asset has positive weights and all the remaining assets have negative weights, see their expression (1). A general basket option is called a hybrid basket option in their paper. Their approach in  pricing a general basket option is as follows. They first write down a general basket of assets as the difference of two positively weighted baskets. Then they approximate the first positively weighted basket option by its geometric average and by doing so they reduce a general basket into a regular basket option. After this, they apply a second-order Taylor expansion to the exercise boundary of the regular basket option and obtain a closed-form approximation formula. Their formula's accuracy is sensitive to the model parameters and works well for the case of underlying assets with relatively low volatility. 

Another methodology which turns out to be applicable to basket options was introduced in  Turnbull and Wakeman \cite{Turnbull1991}, for pricing average European options under log-normal models. They use a moment-matching method on  log-normal distributions to  approximate the average European price by using the Edgeworth expansion technique which was proposed in the seminal paper Jarrow and Rudd \cite{JARROW_RUDD}. The paper  Borovkova et al. \cite{Borovkova} contends that this moment-matching methodology is a good idea for pricing basket options under log-normal models. Their paper shows that their moment-matching works well for baskets featuring both positive and negative weights. To obtain their closed form approximation, they use positively shifted or negatively shifted log-normal random variables to approximate the basket by matching the first three moments. According to the numerical tests in their paper, their closed form approximation formula gives prices which are very close to the basket prices obtained by Monte-Carlo simulations, which are used as the benchmark since their fidelity to the true option price can be very high (also see our Section 4 for how this is justified numerically). Their formula is simple and easy to implement, except that its calibration based on moments requires solving a system of nonlinear equations to obtain some parameters in the shifted log-normal variables. Extensive numerical tests show that these nonlinear equations usually give multiple solutions including complex number ones and the numerical procedure to solve this system of equation is very sensitive to the choice of initial values.

There are other papers which apply the moment-matching method for pricing basket options. For example, the paper  Leccadito et al. \cite{Leccadito} proposes an approach involving Hermite polynomials to approximate the non-Gaussian models in the basket: the shifted jump diffusion process. Wu et al.\cite{Wu_Diao2019} approximate the basket option prices under log-normal models by the sum of shifted log-normal distribution with a polynomial expansion (SLNPE).

In the current paper, we investigate the moment-matching approach in the paper Borovkova et al. \cite{Borovkova} further, by delving into the paper's algebraic equations, and by extending the method's reach to a broader model class. 

First, we obtain real-valued closed form solution for the system of non-linear equations that needs to be solved for implementing their closed form approximation formula of the basket option prices. Since we match only three moments, the scaled and shited log-normal model presents a risk of overfitting with its four parameters, to which one responds by positing that the overall scale parameter should be restricted to $\pm 1$.  With this reduction, and letting that scale parameter equal the sign of the basket's skewness, we are able to express the non-linear system's solutions in closed form, by identifying a unary cubic equation whose sole real solution $x$ is a parameter in the closed-form approximation formulas for the model parameters, and then in turn for the basket option price. Interestingly, $x$ only depends on the basket skewness. The solution $x$ is explicit via Cardano's formula. The option price is given explicitly using the shifted scaled log-normal model parameters and the cdf of the standard normal (so-called error function). The four model parameters are given using explicit and rather elementary expressions in terms of the basket's mean, variance, and skewness, and of the explicit solution $x$ to the aforementioned unary cubic equation. Before moving into more general models and into numerics, we  calculate the first order derivatives of the approximation formula with respect to basket's mean, variance, and skewness to find out how sensitive this approximation formula with respect to these three characteristics of the basket. These sensitivity expressions rely on the sensitivity of $x$ with respect to the basket moments. Since $x$ only depends on the skewness $\eta$, and explicitly so, we can calculate $\partial x/\partial \eta$ explicitly in terms of the skewness as well. 

We then generalize the moment-matching method in Borovkova et al.\cite{Borovkova} to the case of baskets with underlying assets that are 
obtained by time-changing their Brownian motions. Because of the scaling property of Brownian motion, the resulting models at maturity are mixtures of shifted scaled log-normals, where the mixing parameter is the squared volatility, and one is free to use any mixing density. These models are also interpretable as shifted Black-Scholes models which are subordinated to random time changes, and are also known as normal variance mixture (NVM) models when considered at maturity. These are popular models in mathematical finance, since they allow modelers to take into account an asset's business time rather than to assume that volatility is constant or that the stock's natural time scale is always the same as the trading floor's standard time scale. 

 Due to all these favorable properties, such time changed models were proposed to model asset returns in the past. For example, in Madan and Seneta \cite{madan-sieneta} multivariate symmetric variance gamma (VG) process that is obtained by subordinating a multivariate Brownian motion without a drift by a common gamma process is proposed as a good model for modelling asset returns. Similarly,
Barndorff-Nielsen \cite{barndorff-nielsen} proposed the multivariate case of the normal inverse Gaussian (NIG)  process using a common subordinator with  inverse Gaussian (IG) marginal distribution as a good model. The extension to asymmetric case, the case of subordinating Brownian motion with non-zero drift, is later extensively studied in Cont and
Tankov \cite{cont-tankov} and Luciano and Schoutens \cite{luciano-schoutens}.

For these type of models, assuming that the mixing (time-change) distribution has been chosen, we continue to advocate for a model with four parameters, of which the overall scale parameter is set to equal the basket skewness, leaving only three parameters to be determined by moment matching. We show that the moment-matching methodology still leads to a system of non-linear equations for the three free paramters, but that these equations are not as straightforward as in the non-time-changed case. We show how they depend explicitly on the characteristic function of the mixing distribution. Assuming these can be solved, we show how to implement a corresponding closed form approximation formula for the basket price; indeed, the standard arbitrage-free call pricing can be solved explicitly using the non-time-changed method, where one then simply takes expectations with respect to the mixing variable, because the thresholding in the pricing theorem, leading to four different formulas, is based on the sign of the skewness, and the sign of the strike price minus the log-normal shift parameter, and thus does not depend on the mixing variable. We are able to simplify this system of equations and eventually reduce the basket pricing problem into a problem of finding a real root of a real valued function on the real line, which is then illustrated explicitly in our Section 4 on numerical illustrations for three different mixing distributions. The reader can consult that section for a discussion of what these three choices of mixing distributions mean in terms of volatility modeling.

The rest of the paper is arranged as follows. In section 2, we discuss basket options under log-normal models and improve the results in  Borovkova et al. \cite{Borovkova}
by providing a closed form solution to their system of non-linear equations. In section 3, we study basket options with underlying assets that are obtained by time changing shifted and scaled log-normal models. In section 4, we present the numerical performances of our approximation formulas, including specific time changes which lead to semi-explicit solutions to the moment-matching parameter calibration, which are easily implemented numerically. Section 5 concludes the paper. Section 6 contains the proofs of our results.



\section{Log-normal Models}
We consider a financial market with $n$ stocks with price processes $S_i(t), i=1, 2, \cdots, n,$ over the time horizon $[0, T]$. The risk-free rate is $r$. We assume the followoing risk-neutral dynamics for the stocks:
\begin{equation}\label{dynamics}
dS_i(t)=rS_i(t)dt+\sigma_{i}dW_i(t), 1\le i\le n,    
\end{equation}
with initial values $S_i(0),$ and volatilities $\sigma_i, i=1,2, \cdots, n$. Here $W_i(t), 1\le i\le n,$ are  standard Brownian motions which are not necessarily independent; we denote by $\rho_{ij}$ the correlation between $W_i(1)$ and $W_j(1)$. We want to price a vanilla call option on the basket of assets $\sum_{i=1}^{n}w_{i}S_{i}(T)$, where the $w_i$'s are the basket weights. The price of the call option on this basket, with strike price $K$, is given by the discounted risk-neutral valuation formula
\begin{equation}\label{basket}
 \Pi=e^{-rT}E\left[\left(\sum_{i=1}^{n}w_{i}S_{i}(T)-K\right)^{+}\right],
\end{equation}
where $(\cdot)^+$ is the positive part function, with
\begin{equation}
\begin{split}\label{lognormal models}
 S_{i}(t)=S_{i}(0)e^{(r-\frac{1}{2}\sigma_{i}^{2})t+\sigma_{i}W_{i}(t)},\quad i=1,\cdots,n,
\end{split}
\end{equation}
since those log-normal models are classically the solutions of the dynamics in (\ref{dynamics}). In (\ref{basket}), the weights $\omega_i, i=1, 2, \cdots, n$ can be positive and negative real numbers, and the expectation operator $E$ is under the risk neutral measure where, as mentioned, the Brownian motions $W_i$ and $W_j$ in the lognormal variables \eqref{lognormal models} have correlations $\rho_{ij}$. We denote by $\Sigma$ the correlation matrix of these $\rho_{ij}$'s. 

To lighten the notation a bit, we denote by $N=(N_{1},\cdots,N_{n})$ an $n-$ dimensional normal random vector with the same distribution as $\frac{1}{\sqrt{T}}(W_{1}(T),\cdots,W_{n}(T))$. Then $(N_{1},\cdots,N_{n})$ are standard normal random variables and their correlation matrix is equal to $\Sigma$. With this new notations, the basket price is immediately written as
\begin{equation}\label{new1}
 \Pi=e^{-rT}E\left[\left(\sum_{i=1}^{n}w_{i}S_{i}(0)e^{(r-\frac{1}{2}\sigma_{i}^{2})T+\sigma_{i}\sqrt{T}N_{i}}-K\right)^{+}\right].
\end{equation}
For further convenience, we introduce the  notation 
\begin{equation}\label{btdef}
 B(T)=:\sum_{i=1}^{n}w_{i}S_{i}(0)e^{(r-\frac{1}{2}\sigma_{i}^{2})T+\sigma_{i}\sqrt{T}N_{i}}.
\end{equation}
which expresses the price at time $T$ of the basket itself. Then the basket's option price in (\ref{new1}) is simply 
\begin{equation}\label{bt}
\Pi=e^{-rT}E(B(T)-K)^+.    
\end{equation}

Because of what is classically known about closed-form formulas for call option prices, one may expect log-normal asset models to be the only cases where such formulas are derivable and are robust to some extent. The strategy is to extend this informal intuition slightly, and to derive  an approximate closed form formula for the price $\Pi$ in \eqref{bt} by approximating the random variable $B(T)$, which is a linear combination of several log-normals, with some appropriate simpler function of a single log-normal random variable. When the basket has positive weights $\omega_i, i=1, 2, \cdots, n,$ L\'evy \cite{Levy_1992} proposed to use a single log-normal random variable to approximate $B(T)$ by matching only the first two moments of the basket asset price and the log-normal random variable. The approach in L\'evy \cite{Levy_1992} gives rather good approximation for the basket price when the weights of the basket are non-negative. When it comes to potentially negative weighted basket case, which as we mentioned in the introduction, is common for spread-type baskets, the moment-matching method to approximate the basket by other types of random variables is rather counter-intuitive as the basket's price can assume negative values. However, despite this fact, the paper \cite{Borovkova}
was able to produce highly accurate approximate closed form basket prices through approximating baskets with potentially negative weights by using shifted log-normal models. In this section, we investigate their approach further. Especially, improving on \cite{Borovkova}, we give a real-valued closed-form solution to the set of non-linear equations in \cite{Borovkova}, and we express the approximate closed-form basket option price as an explicit functional form of the mean, variance, and skewness of the basket itself.

Borovkova et al.\cite{Borovkova},
 propose to use four real-valued parameters $m,s,\tau,c$, and the following approximating random variable 
\begin{equation}\label{4par}
\begin{split}
 c(e^{sN+m}+\tau),
\end{split}
\end{equation}
where $N$ is a standard normal random variable, $e^m$ can be interpreted as a scale parameter ($m$ itself is called a scale parameter in the origial paper \cite{Borovkova}), $s$ is interpreted as shape parameter, and $\tau$ is a location parameter. The parameter $c$ is included to adjust the overall sign of the basket price when weights are not all of the same sign; we will see that it is most efficient to define $c$ as a sign parameter that takes value one when the skewness of the basket is positive and
value negative one if the skewness of the basket is negative. To find the four parameters $c, s, m, \tau,$ the paper  Borovkova et al.\cite{Borovkova} matches the first three moments, via  (10), (11), (12) in their paper, after the value of the sign $c=\pm1$ is determined. While moment matching is an elementary approach, it proves to be highly effective in principle, for basket call pricing, as demonstrated in \cite{Borovkova}. The main challenge of the approach in Borovkova et al.\cite{Borovkova} is that the system of  non-linear equations (10), (11), (12) in their paper needs to be solved numerically. As mentioned in the introduction, the solutions of this system of equations are not unique and selecting proper initial values for the numerical solutions of the system is also challenging. In this section, we address these problems, providing a practical fix by identifying a specific real-valued closed form solution for the non-linear system of equations in Borovkova et al.\cite{Borovkova} which works well for basket call pricing.

The paper Borovkova et al.\cite{Borovkova} considers basket option pricing problems in futures markets. In the current note, without loosing any generality, we consider stock markets and adapt the approach in Borovkova et al.\cite{Borovkova} for pricing basket call options based on stocks.

We begin by calculating the first three moments of the basket based on stocks. These calculations are similar to the calculations (6), (7), (8)  in Borovkova et al.\cite{Borovkova}. For the sake of completeness and for our paper to be self-contained, we present these calculations in a Lemma below.

\begin{lemma}\label{pp1}
Suppose the basket of assets has a price given, as in \eqref{btdef}, by
\begin{equation}
\begin{split}
 B(T)=\sum_{i=1}^{n}w_{i}S_{i}(0)e^{(r-\frac{1}{2}\sigma_{i}^{2})T+\sigma_{i}\sqrt{T}N_{i}}
\end{split}
\end{equation}
and recall that we denote by $\rho_{ij}$ the elements of the correlation matrix $\Sigma$ of $N$. Then the first three moments of $B(T)$ are
\begin{equation}
\begin{split}
 E[B(T)]&=e^{rT}\sum_{i=1}^{n}w_{i}S_{i}(0)\\
 E[(B(T))^{2}]&=e^{2rT}\sum_{i=1}^{n}\sum_{j=1}^{n}w_{i}w_{j}S_{i}(0)S_{j}(0)e^{\rho_{ij}\sigma_{i}\sigma_{j}T}\\
 E[(B(T))^{3}]&=e^{3rT}\sum_{i=1}^{n}\sum_{j=1}^{n}\sum_{k=1}^{n}w_{i}w_{j}w_{k}S_{i}(0)S_{j}(0)S_{k}(0)e^{(\rho_{ij}\sigma_{i}\sigma_{j}+\rho_{ik}\sigma_{i}\sigma_{k}+\rho_{jk}\sigma_{j}\sigma_{k})T}
\end{split}
\end{equation}
\end{lemma}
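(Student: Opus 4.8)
The plan is to reduce every computation to a single fact about the jointly Gaussian vector $N=(N_1,\dots,N_n)$: for any real coefficients $a_1,\dots,a_n$, the linear combination $\sum_{i=1}^n a_i N_i$ is Gaussian with mean $0$ and variance $\sum_{i,j} a_i a_j \rho_{ij}$, where I use $\Cov(N_i,N_j)=\rho_{ij}$ and in particular $\rho_{ii}=1$. Consequently the Laplace transform identity
\begin{equation}\label{mgfid}
E\Big[\exp\Big(\sum_{i=1}^n a_i N_i\Big)\Big]=\exp\Big(\tfrac12\sum_{i,j} a_i a_j \rho_{ij}\Big)
\end{equation}
holds, and I will apply it with $a$ supported on one, two, or three coordinates.

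For the mean, I would expand $E[B(T)]$ by linearity, pulling the deterministic prefactor $w_i S_i(0)e^{(r-\frac12\sigma_i^2)T}$ out of each term and using \eqref{mgfid} on the single coordinate $i$ with $a_i=\sigma_i\sqrt T$, which yields $E[e^{\sigma_i\sqrt T N_i}]=e^{\frac12\sigma_i^2 T}$. The drift correction $-\frac12\sigma_i^2 T$ cancels this exactly, leaving $e^{rT}\sum_i w_i S_i(0)$. For the second moment, I square $B(T)$ to produce the double sum $\sum_{i,j}$ of products of two log-normal factors; collecting the deterministic prefactor $w_i w_j S_i(0)S_j(0)e^{(2r-\frac12(\sigma_i^2+\sigma_j^2))T}$ and applying \eqref{mgfid} on coordinates $i,j$ with $a_i=\sigma_i\sqrt T$, $a_j=\sigma_j\sqrt T$, the exponential expectation contributes $\exp\big(\frac12 T(\sigma_i^2+\sigma_j^2)+T\rho_{ij}\sigma_i\sigma_j\big)$. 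Again the $\frac12(\sigma_i^2+\sigma_j^2)T$ cancels the drift corrections, producing the claimed $e^{2rT}\sum_{i,j} w_iw_jS_i(0)S_j(0)e^{\rho_{ij}\sigma_i\sigma_j T}$.

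The third moment is identical in spirit: I cube $B(T)$ to obtain the triple sum $\sum_{i,j,k}$, and apply \eqref{mgfid} to the three-coordinate combination $\sigma_i\sqrt T N_i+\sigma_j\sqrt T N_j+\sigma_k\sqrt T N_k$, whose half-variance equals $\frac12 T(\sigma_i^2+\sigma_j^2+\sigma_k^2)+T(\rho_{ij}\sigma_i\sigma_j+\rho_{ik}\sigma_i\sigma_k+\rho_{jk}\sigma_j\sigma_k)$. Combining with the prefactor exponent $3rT-\frac12 T(\sigma_i^2+\sigma_j^2+\sigma_k^2)$ gives exactly $3rT+T(\rho_{ij}\sigma_i\sigma_j+\rho_{ik}\sigma_i\sigma_k+\rho_{jk}\sigma_j\sigma_k)$, which is the stated expression.

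The main difficulty here is not conceptual but a matter of careful bookkeeping of exponents: one must verify that the drift terms $-\frac12\sigma_i^2 T$ inherited from the log-normal dynamics cancel precisely against the diagonal contributions of the variance of the Gaussian sum, so that only the off-diagonal correlation cross-terms $\rho_{ij}\sigma_i\sigma_j T$ survive. A minor point worth flagging is that the coincident-index cases (such as $i=j$ in the double sum, or repeated indices in the triple sum) require no separate treatment, since \eqref{mgfid} handles them automatically through $\rho_{ii}=1$; the variance of a sum with repeated coordinates simply aggregates those coefficients before squaring.
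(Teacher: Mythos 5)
Your proposal is correct and coincides with the calculation the paper has in mind: the paper's proof of this lemma is simply stated as ``straightforward,'' and the straightforward calculation is exactly what you carried out, namely expanding the powers of $B(T)$ into sums over one, two, or three indices and applying the Gaussian moment-generating-function identity $E[\exp(\sum_i a_i N_i)]=\exp(\tfrac12\sum_{i,j}a_ia_j\rho_{ij})$, with the drift terms $-\tfrac12\sigma_i^2T$ cancelling the diagonal variance contributions. Your remark that coincident indices need no separate treatment (via $\rho_{ii}=1$) is a correct and worthwhile bookkeeping check.
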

\begin{proof} The calculations are straightforward.
\end{proof}

For the sake of simplicity of notations, we denote by 
\[
\mu_B(T):=EB(T)
\] the mean of the basket, by  
\[
\sigma_B(T):=\sqrt{E\left[(B(T)-E[B(T)])^{2}\right]}.
\]
the standard deviation of the basket and by  \begin{equation}
\begin{split}
 \eta_B(T):=\frac{E\left[(B(T)-E[B(T)])^{3}\right]}{\sigma_{B}(T)^{3}}.
\end{split}
\end{equation}
the skewness of the basket. The first three moments of the approximating random variable  $c(e^{sN+m}+\tau)$ can be calculated easily as follows
\begin{equation}
\begin{split}
 &M_{1}:=E[c(e^{sN+m}+\tau)]=c(e^{\frac{1}{2}s^{2}+m}+\tau),\\
 &M_{2}:=E[(c(e^{sN+m}+\tau))^{2}]=c^{2}(e^{2s^{2}+2m}+2\tau e^{\frac{1}{2}s^{2}+m}+\tau^{2}),\\
 &M_{3}:=E[(c(e^{sN+m}+\tau))^{3}]=c^{3}(e^{\frac{9}{2}s^{2}+3m}+3\tau e^{2s^{2}+2m}+3\tau^{2}e^{\frac{1}{2}s^{2}+m}+\tau^{3}).
\end{split}
\end{equation}

As explained above, the core idea in Borovkova et al.\cite{Borovkova} is  to estimate the four parameters  $c,s,m,\tau,$ by applying the following moment match 
\begin{equation}\label{moment}
\begin{split}
 M_{1}=E[B(T)],\quad M_{2}=E[(B(T))^{2}],\quad M_{3}=E[(B(T))^{3}].
\end{split}
\end{equation}
The relations (\ref{moment}) give a system of nonlinear equations (10), (11), (12) as in Borovkova et al.\cite{Borovkova} and as mentioned earlier, numerical solutions of this system of equations is challenging. In the following Lemma, we give one real-valued root of this non-linear system of equations.

\begin{lemma}\label{pp2} One set of real-valued solution $c, s, m, \tau, $ that satisfy the moment conditions (\ref{moment}) is given by
\begin{equation}\label{133}
\begin{split}
 c=\mathrm{sgn}(\eta_{B}(T)),\quad s=(\ln(x))^{\frac{1}{2}},\quad m=\frac{1}{2}\ln(\frac{\sigma^{2}_{B}(T)}{x(x-1)}),\quad \tau=\mathrm{sgn}(\eta_{B}(T))\mu_B(T)-\frac{\sigma_{B}(T)}{(x-1)^{\frac{1}{2}}},
\end{split}
\end{equation}
where $\mathrm{sgn}(\cdot)$ is the sign function and
\begin{equation}
\begin{split}
 x=\sqrt[3]{1+\frac{1}{2}\eta_{B}^{2}(T)+\eta_{B}(T)\sqrt{1+\frac{1}{4}\eta_{B}^{2}(T)}}+\sqrt[3]{1+\frac{1}{2}\eta_{B}^{2}(T)-\eta_{B}(T)\sqrt{1+\frac{1}{4}\eta_{B}^{2}(T)}}-1.
\end{split}
\end{equation}
\end{lemma}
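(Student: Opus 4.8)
The plan is to reparametrize the three moment equations so that the skewness condition decouples from the other two and collapses to a single cubic in one variable. First I would set $u = e^{\frac{1}{2}s^{2}+m}$ and $v = e^{s^{2}}$, under which the raw moments of the approximating variable become the polynomial expressions $M_{1} = c(u+\tau)$, $M_{2} = c^{2}(u^{2}v + 2\tau u + \tau^{2})$, and $M_{3} = c^{3}(u^{3}v^{3} + 3\tau u^{2}v + 3\tau^{2}u + \tau^{3})$. Since the sign parameter satisfies $c^{2}=1$, matching the mean, variance, and third central moment of $B(T)$ is equivalent to matching the three raw moments in \eqref{moment}, and it is cleaner to work with the centered quantities.

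Next I would compute the variance and the third central moment of $Y := c(e^{sN+m}+\tau)$. Writing $X := e^{sN+m}$, one has $Y - EY = c(X - EX)$, so $\mathrm{Var}(Y) = \mathrm{Var}(X) = u^{2}(v-1)$ and $E[(Y-EY)^{3}] = c^{3}E[(X-EX)^{3}] = c^{3}u^{3}(v^{3}-3v+2)$. The key algebraic simplification is the factorization $v^{3}-3v+2 = (v-1)^{2}(v+2)$, which yields the skewness
\[
\mathrm{skew}(Y) = \frac{c^{3}u^{3}(v-1)^{2}(v+2)}{\left(u^{2}(v-1)\right)^{3/2}} = c\,(v+2)\sqrt{v-1},
\]
using $u>0$, $v>1$, and $c^{3}=c$. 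The crucial structural observation, which is the heart of the argument, is that this skewness depends only on $v$ (equivalently, only on $s$) and not on $u$ or $\tau$; hence the skewness-matching equation decouples entirely from the mean and variance conditions and can be solved first in isolation.

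I would then match skewness. Taking $c = \mathrm{sgn}(\eta_{B}(T))$ fixes the sign, and since $(v+2)\sqrt{v-1}>0$ the condition reduces to $(v+2)\sqrt{v-1} = |\eta_{B}(T)|$; squaring and writing $v=x$ gives the cubic $x^{3}+3x^{2}-4 = \eta_{B}^{2}(T)$. The main technical step is to verify that the stated Cardano expression is its root. Substituting $y = x+1$ reduces the cubic to the depressed form $y^{3}-3y = 2+\eta_{B}^{2}(T)$; writing the two summands in the formula for $x+1$ as $A^{1/3}$ and $B^{1/3}$, a short computation gives $A+B = 2+\eta_{B}^{2}(T)$ and, after cancellation, $AB = 1$, so that $(A^{1/3}+B^{1/3})^{3} = (A+B) + 3(A^{1/3}+B^{1/3})$, which is precisely $y^{3} = 3y + (2+\eta_{B}^{2}(T))$. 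This confirms $v=x$. Because $\eta_{B}^{2}(T)>0$ and $x\mapsto x^{3}+3x^{2}-4$ is increasing on $x>0$ with value $0$ at $x=1$, the relevant root obeys $x>1$, which guarantees that $s=(\ln x)^{1/2}$ and all square roots appearing below are real.

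Finally I would back-substitute to recover $m$ and $\tau$. With $v=x$ fixed, the variance condition $u^{2}(v-1)=\sigma_{B}^{2}(T)$ together with $u^{2}=e^{2m}v = x\,e^{2m}$ gives $e^{2m} = \sigma_{B}^{2}(T)/\big(x(x-1)\big)$, i.e.\ the stated value of $m$; and the mean condition $c(u+\tau)=\mu_{B}(T)$, using $u=\sigma_{B}(T)/\sqrt{x-1}$ and $1/c=c$, yields $\tau = \mathrm{sgn}(\eta_{B}(T))\mu_{B}(T) - \sigma_{B}(T)/(x-1)^{1/2}$, matching \eqref{133}. I expect the only delicate points to be carefully tracking the sign parameter $c$ through the centered moments, and noting that the construction requires $\eta_{B}(T)\neq 0$ (so that $c=\pm1$ and $x>1$); the degenerate symmetric case $\eta_{B}(T)=0$ is excluded, as it forces $x=1$ and hence a vanishing denominator.
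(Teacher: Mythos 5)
Your proposal is correct and follows essentially the same path as the paper's own proof: both reduce the moment match to the observation that the approximating variable's skewness depends only on $x=e^{s^{2}}$ (via the factorization $x^{3}-3x+2=(x-1)^{2}(x+2)$), arrive at the same cubic $x^{3}+3x^{2}-4=\eta_{B}^{2}(T)$ through the shift $y=x+1$, and back-substitute to recover $m$ and $\tau$. Your two departures are cosmetic but welcome: centering ($Y-EY=c(X-EX)$) handles both signs of $c$ at once where the paper runs two parallel cases, and your direct verification that the stated expression is a root (via $A+B=2+\eta_{B}^{2}(T)$ and $AB=1$) replaces the paper's discriminant argument --- though to identify it as the root exceeding $1$ you should add the one-line remark that $A,B>0$ with $AB=1$ gives $A^{1/3}+B^{1/3}\geq 2$ by AM--GM, or note, as the paper does, that the cubic has only one real root.
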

\begin{proof}
See the Appendix.
\end{proof}

Now the basket price in  (\ref{new1}) can be approximated by
\begin{equation}\label{aa1}
\begin{split}
 \hat{\Pi}:= e^{-rT}E\left[\left(c(e^{sN+m}+\tau)-K\right)^{+}\right],
\end{split}
\end{equation}
with the parameters $c, s, m, \tau $ given as in Lemma \ref{pp2} above. A further evaluation leads  us to
\begin{equation}\label{pihat}
\hat{\Pi}= \left\{
\begin{array}{ll}
e^{-rT}E\left[\left(e^{sN+m}+\tau-K\right)^{+}\right],\quad &\multirow{1}*{$\eta_{B}(T)>0 $},\\
\specialrule{0em}{1ex}{1ex}
e^{-rT}E\left[\left(-e^{sN+m}-\tau-K\right)^{+}\right],\quad &\multirow{1}*{$\eta_{B}(T)<0$}.
\end{array}
\right.
\end{equation}
We would like to write the expression (\ref{pihat}) as explicitly as possible.
For this, we need to  compare the values of $\tau$ (or $-\tau$) and $K$ according to the sign of $\eta_{B}(T)$. For example, when $\eta_{B}(T)>0$, if $K\leq\tau$, we can directly write the $\hat{\Pi}$ as
\begin{equation}\label{3}
\begin{split}
 \hat{\Pi}= e^{-rT}E\left[e^{sN+m}+\tau-K\right],
\end{split}
\end{equation}
as in this case both $e^{sN+m}$  and $\tau-K$ are non-negative and therefore we can get rid of the positive sign operator inside the expectation in (\ref{aa1}). But if $\eta_{B}(T)>0$ and $K>\tau$ we need to evaluate (\ref{aa1}) differently. Similarly, the case $\eta_{B}(T)<0$ also needs to be treated by dividing into cases. We summarize all these in the following Theorem. From now on we denote by $\Phi(\cdot)$ the cumulative distribution function of standard normal distribution.

\begin{theorem}\label{pp3} The basket option price (\ref{basket}) can be approximated by $\hat{\Pi}$ which is given by
\begin{equation}\label{pp3_1}
\hat{\Pi}=\left\{
\begin{array}{ll}
e^{-rT}(e^{m+\frac{1}{2}s^{2}}+\tau-K),\quad &\multirow{1}*{$c=1,K \leq \tau$},\\
\specialrule{0em}{1ex}{1ex}
e^{-rT+m+\frac{1}{2}s^{2}}\Phi(d_{11})-e^{-rT}(K-\tau)\Phi(d_{12}),\quad &\multirow{1}*{$c=1,K > \tau$},\\
\specialrule{0em}{1ex}{1ex}
0,\quad &\multirow{1}*{$c=-1,K > -\tau$},\\
\specialrule{0em}{1ex}{1ex}
-e^{-rT+m+\frac{1}{2}s^{2}}\Phi(d_{21})+e^{-rT}(-K-\tau)\Phi(d_{22}),\quad &\multirow{1}*{$c=-1,K \le -\tau$},
\end{array}\right.
\end{equation}
where $s, m, \tau$ are given by (\ref{133}) and
\begin{equation}
\begin{split}
 d_{11}&=\frac{-\ln(K-\tau)+m+s^{2}}{s},\\
 d_{12}&=\frac{-\ln(K-\tau)+m}{s},\\
 d_{21}&=\frac{\ln(-K-\tau)-m-s^{2}}{s},\\
 d_{22}&=\frac{\ln(-K-\tau)-m}{s}.
\end{split}
\end{equation}
\end{theorem}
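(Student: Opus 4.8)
The plan is to start from the two-case reduction already recorded in \eqref{pihat}, which splits $\hat\Pi$ according to the sign of $\eta_B(T)$, equivalently according to whether $c=+1$ or $c=-1$, and then to subdivide each of these two regimes according to the position of the strike $K$ relative to $\tau$ (respectively $-\tau$), thereby producing the four cases displayed in \eqref{pp3_1}. Throughout I will use that $s=(\ln x)^{1/2}>0$, which holds because $x>1$ (a fact established in the proof of Lemma \ref{pp2}), so that $N\mapsto e^{sN+m}$ is strictly increasing and the threshold manipulations below are unambiguous. The only analytic input is the pair of elementary one-dimensional Gaussian identities $E[e^{sN+m}\mathbf{1}_{\{N>a\}}]=e^{m+\frac12 s^2}\Phi(s-a)$ and $E[e^{sN+m}\mathbf{1}_{\{N<b\}}]=e^{m+\frac12 s^2}\Phi(b-s)$, obtained by completing the square in the standard normal density, together with $P(N>a)=\Phi(-a)$.

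First I would treat $c=1$, where $\hat\Pi=e^{-rT}E[(e^{sN+m}+\tau-K)^+]$. If $K\le\tau$, then $\tau-K\ge0$ and $e^{sN+m}>0$ almost surely, so the integrand is nonnegative and the operator $(\cdot)^+$ may simply be dropped; taking expectations and using $E[e^{sN}]=e^{s^2/2}$ gives the first line of \eqref{pp3_1}. If instead $K>\tau$, the argument $e^{sN+m}+\tau-K$ is positive exactly on the event $\{N>a\}$ with $a=(\ln(K-\tau)-m)/s$, so I would restrict the expectation to this half-line and split it into $E[e^{sN+m}\mathbf{1}_{\{N>a\}}]$ and $(\tau-K)P(N>a)$. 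The first equals $e^{m+\frac12 s^2}\Phi(s-a)=e^{m+\frac12 s^2}\Phi(d_{11})$, and the second equals $-(K-\tau)\Phi(-a)=-(K-\tau)\Phi(d_{12})$, which together give the second line.

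The case $c=-1$ is handled symmetrically after rewriting the payoff as $(-e^{sN+m}-\tau-K)^+$. When $K>-\tau$ one has $-\tau-K<0$ and $-e^{sN+m}<0$, so the bracket is negative almost surely and $\hat\Pi=0$, the third line. When $K\le-\tau$ the bracket is positive exactly on $\{N<b\}$ with $b=(\ln(-K-\tau)-m)/s$; restricting to this event and splitting as before produces $-e^{m+\frac12 s^2}\Phi(b-s)=-e^{m+\frac12 s^2}\Phi(d_{21})$ from the exponential term and $(-K-\tau)\Phi(b)=(-K-\tau)\Phi(d_{22})$ from the indicator term, giving the fourth line.

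The computations are routine Gaussian integrals, so no serious conceptual obstacle arises. I expect the only point requiring genuine care to be the sign bookkeeping in the $c=-1$ branch — keeping the direction of the truncation ($\{N<b\}$ versus $\{N>a\}$) consistent with the sign of $c$ and verifying that the thresholds yield exactly the stated $d_{11},d_{12},d_{21},d_{22}$ — together with checking that the two degenerate branches ($c=1,\,K\le\tau$ and $c=-1,\,K>-\tau$) are matched correctly, so that the four cases partition all admissible $(c,K)$ configurations.
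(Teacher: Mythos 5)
Your proposal is correct and follows essentially the same route as the paper's own proof: the same four-case decomposition according to $c=\pm1$ and the position of $K$ relative to $\pm\tau$, with each nontrivial case evaluated by identifying the exercise event as a half-line in $N$ and computing the truncated Gaussian expectation by completing the square (the paper writes this as $E[e^{sN}\,I(\cdot)]=e^{\frac12 s^2}P(\text{shifted event})$, which is exactly your identity $E[e^{sN+m}\mathbf{1}_{\{N>a\}}]=e^{m+\frac12 s^2}\Phi(s-a)$). Your thresholds $a$ and $b$ reproduce precisely the stated $d_{11},d_{12},d_{21},d_{22}$, so nothing is missing.
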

\begin{proof}
See the Appendix.
\end{proof}


Next, we study the sensitiveness of the
approximated basket price $\hat{\Pi}$ with respect to the mean $\mu_B(T)$, the standard deviation $\sigma_{B}(T)$, and the skewness $\eta_{B}(T)$ of basket $B(T)$.
\begin{proposition}\label{th1}
The first order derivatives  of $\hat{\Pi}$ with respect to $\mu_B(T), \sigma_B(T),$ and $\eta_B(T)$ are  given by
\begin{enumerate}
    \item [1.] When $c=1,K\leq\tau$, we have
\begin{equation}
\begin{split}
 \frac{\partial\hat{\Pi}}{\partial \mu_B(T)}=e^{-rT},\quad
 \frac{\partial\hat{\Pi}}{\partial\sigma_{B}(T)}=0,\quad
 \frac{\partial\hat{\Pi}}{\partial\eta_{B}(T)}=0.
\end{split}
\end{equation}
\item [2.] When $c=1,K>\tau$, we have
\begin{equation}
\begin{split}
 \frac{\partial\hat{\Pi}}{\partial \mu_B(T)}&=e^{-rT}\Phi(d_{12}),\\
 \frac{\partial\hat{\Pi}}{\partial\sigma_{B}(T)}&=\frac{e^{-rT}}{\sqrt{x-1}}(\Phi(d_{11})-\Phi(d_{12})),\\
 \frac{\partial\hat{\Pi}}{\partial\eta_{B}(T)}&=\frac{e^{-rT}\sigma_{B}(T)}{2(x-1)^{\frac{1}{2}}}\Big(\frac{-\Phi(d_{11})+\Phi(d_{12})}{x-1}+\frac{\phi(d_{11})}{x(\ln(x))^{\frac{1}{2}}}\Big)\frac{\partial x}{\partial\eta_{B}(T)}.
\end{split}
\end{equation}
\item [3.] When $c=-1,K\geq-\tau$, we have
\begin{equation}
\begin{split}
 \frac{\partial\hat{\Pi}}{\partial \mu_B(T)}=\frac{\partial\hat{\Pi}}{\partial\sigma_{B}(T)}=\frac{\partial\hat{\Pi}}{\partial\eta_{B}(T)}=0.
\end{split}
\end{equation}
\item [4.] When $c=-1,K<-\tau$, we have
\begin{equation}\label{dPideta}
\begin{split}
 \frac{\partial\hat{\Pi}}{\partial \mu_B(T)}&=e^{-rT}\Phi(d_{22}),\\
 \frac{\partial\hat{\Pi}}{\partial\sigma_{B}(T)}&=\frac{e^{-rT}}{\sqrt{x-1}}(-\Phi(d_{21})+\Phi(d_{22})),\\
 \frac{\partial\hat{\Pi}}{\partial\eta_{B}(T)}&=\frac{e^{-rT}\sigma_{B}(T)}{2(x-1)^{\frac{1}{2}}}\Big(\frac{\Phi(d_{21})-\Phi(d_{22})}{x-1}+\frac{\phi(d_{21})}{x(\ln(x))^{\frac{1}{2}}}\Big)\frac{\partial x}{\partial\eta_{B}(T)},
\end{split}
\end{equation}
where $\phi(\cdot)$ is the probability density function of standard normal random variable and 
\begin{equation}
\begin{split}
 d_{11}&=\frac{-\ln(K-\tau)+m+s^{2}}{s},\\
 d_{12}&=\frac{-\ln(K-\tau)+m}{s},\\
 d_{21}&=\frac{\ln(-K-\tau)-m-s^{2}}{s},\\
 d_{22}&=\frac{\ln(-K-\tau)-m}{s},
\end{split}
\end{equation}
with $c, s, m, \tau, x$ given by Lemma \ref{pp2}, and 
\begin{equation}\label{dxdeta}
\begin{split}
&\frac{\partial x}{\partial\eta_{B}(T)}\\&=\frac{1}{3}(1+\frac{1}{2}\eta_{B}^{2}(T)+\eta_{B}(T)(1+\frac{1}{4}\eta_{B}^{2}(T))^{\frac{1}{2}})^{-\frac{2}{3}}(\eta_{B}(T)+(1+\frac{1}{2}\eta_{B}^{2}(T))(1+\frac{1}{4}\eta_{B}^{2}(T))^{-\frac{1}{2}})\\
 &+\frac{1}{3}(1+\frac{1}{2}\eta_{B}^{2}(T)-\eta_{B}(T)(1+\frac{1}{4}\eta_{B}^{2}(T))^{\frac{1}{2}})^{-\frac{2}{3}}(\eta_{B}(T)-(1+\frac{1}{2}\eta_{B}^{2}(T))(1+\frac{1}{4}\eta_{B}^{2}(T))^{-\frac{1}{2}}).
\end{split}
\end{equation}

\end{enumerate}
\end{proposition}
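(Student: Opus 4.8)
The plan is to differentiate the four closed-form branches of $\hat{\Pi}$ in Theorem \ref{pp3} directly, after setting up careful bookkeeping of how the parameters of Lemma \ref{pp2} depend on the three basket characteristics. The first thing I would record is that $x$ is a function of $\eta_B(T)$ alone, so $s=(\ln x)^{1/2}$ depends only on $\eta_B(T)$, while $m=\tfrac12\ln\!\big(\sigma_B^2(T)/(x(x-1))\big)$ depends only on $\sigma_B(T)$ and $\eta_B(T)$, and $\tau$ on all three. In particular $\partial x/\partial\mu_B=\partial x/\partial\sigma_B=0$, and every dependence on $\eta_B(T)$ enters solely through $x$; away from $\eta_B(T)=0$ the sign $c$ is locally constant and may be treated as a constant. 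The $\eta_B(T)$-derivatives will therefore all be computed by the chain rule as $(\partial\hat{\Pi}/\partial x)\,(\partial x/\partial\eta_B)$, with $\partial x/\partial\eta_B$ obtained at the end by differentiating Cardano's formula for $x$ term by term, which yields (\ref{dxdeta}).

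Two algebraic simplifications will do most of the work. First, writing $A:=e^{m+\frac12 s^2}$, a direct computation collapses the exponent to $m+\tfrac12 s^2=\tfrac12\ln(\sigma_B^2(T)/(x-1))$, so that $A=\sigma_B(T)/(x-1)^{1/2}$. Second, the first moment condition $M_1=\mu_B(T)$ reads $c(A+\tau)=\mu_B(T)$, i.e. $A+\tau=c\,\mu_B(T)$; hence $K-\tau=A+(K-\mu_B(T))$ when $c=1$ and $-K-\tau=A+(\mu_B(T)-K)$ when $c=-1$. In both cases the coefficient multiplying the second normal cdf equals $A$ plus a quantity that is independent of $x$ and of $\sigma_B(T)$. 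This immediately disposes of Cases 1 and 3: when $c=1,\ K\le\tau$, substituting $A+\tau=\mu_B(T)$ into $\hat{\Pi}=e^{-rT}(A+\tau-K)$ gives $\hat{\Pi}=e^{-rT}(\mu_B(T)-K)$, whose derivatives are $e^{-rT},0,0$; and when $c=-1,\ K\ge-\tau$ we have $\hat{\Pi}\equiv 0$, so all three derivatives vanish.

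For the genuine option branches (Cases 2 and 4) I would first establish the Black--Scholes-type density identity
\[
A\,\phi(d_{11})=(K-\tau)\,\phi(d_{12}),\qquad A\,\phi(d_{21})=(-K-\tau)\,\phi(d_{22}),
\]
which follows from $d_{11}-d_{12}=s$ (respectively $d_{22}-d_{21}=s$) by evaluating the ratio $\phi(d_{11})/\phi(d_{12})=\exp(-\tfrac12(d_{11}^2-d_{12}^2))$ and simplifying with the definitions of the $d_{1\bullet}$ and of $A$. With this identity in hand, differentiating $e^{rT}\hat{\Pi}=A\,\Phi(d_{11})-(K-\tau)\,\Phi(d_{12})$ (Case 2) with respect to $\mu_B(T)$, and using $\partial A/\partial\mu_B=0$ together with $\partial d_{11}/\partial\mu_B=\partial d_{12}/\partial\mu_B$, makes all density terms cancel and leaves $e^{-rT}\Phi(d_{12})$. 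The $\sigma_B(T)$-derivative is handled the same way: since $K-\tau=A+(K-\mu_B(T))$ gives $\partial(K-\tau)/\partial\sigma_B=\partial A/\partial\sigma_B=(x-1)^{-1/2}$ and $\partial d_{11}/\partial\sigma_B=\partial d_{12}/\partial\sigma_B$, the density terms again cancel and only the two $\Phi$-terms survive, producing $\tfrac{e^{-rT}}{(x-1)^{1/2}}(\Phi(d_{11})-\Phi(d_{12}))$. Case 4 is entirely parallel, with the sign bookkeeping of $c=-1$ and the second identity above.

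The main obstacle is the $\eta_B(T)$-derivative in Cases 2 and 4, where every parameter moves with $x$. Here I would again use $\partial(K-\tau)/\partial x=\partial A/\partial x=-\tfrac{\sigma_B(T)}{2}(x-1)^{-3/2}$ (a consequence of $K-\tau=A+(K-\mu_B(T))$), so that differentiating $e^{rT}\hat{\Pi}$ in $x$ groups the two $\Phi$-terms into $\partial A/\partial x\,(\Phi(d_{11})-\Phi(d_{12}))$, while the identity $A\,\phi(d_{11})=(K-\tau)\,\phi(d_{12})$ collapses the two density terms into $A\,\phi(d_{11})\,\partial(d_{11}-d_{12})/\partial x=A\,\phi(d_{11})\,\partial s/\partial x$, with $\partial s/\partial x=\tfrac{1}{2x(\ln x)^{1/2}}$. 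Substituting $A=\sigma_B(T)/(x-1)^{1/2}$ and factoring out $\sigma_B(T)/(2(x-1)^{1/2})$ yields exactly the bracketed expression in Case 2, and multiplying by $\partial x/\partial\eta_B$ completes it; Case 4 follows by the same reduction with the signs of $c=-1$. Finally $\partial x/\partial\eta_B$ is computed directly from the explicit cubic root for $x$, giving (\ref{dxdeta}).
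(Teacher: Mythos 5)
Your proposal is correct and follows essentially the same route as the paper's proof: the same four-case division, rewriting the parameters so that $e^{m+\frac12 s^2}=\sigma_B(T)/(x-1)^{1/2}$ and $\hat{\Pi}$ becomes an explicit function of $\mu_B(T),\sigma_B(T),x$, the same key density identity (your $A\,\phi(d_{11})=(K-\tau)\,\phi(d_{12})$ is exactly the paper's relation derived from $d_{11}=d_{12}+(\ln x)^{1/2}$), the same cancellation of the $\phi$-terms in the $\mu_B$- and $\sigma_B$-derivatives, and the same chain rule through $x$ with $\partial d_{11}/\partial x-\partial d_{12}/\partial x=\frac{1}{2x(\ln x)^{1/2}}$ for the skewness derivative. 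Your use of the abbreviation $A$ and the decomposition $K-\tau=A+(K-\mu_B(T))$ merely streamlines the bookkeeping; the underlying argument is identical.
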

\begin{proof}
See the Appendix.
\end{proof}
The sensitivity of the basket option price to the basket's skewness can be investigated directly from equations (\ref{dPideta}) and (\ref{dxdeta}). The limiting behavior for small skewness $\eta=\eta_{B}(T)$ is particularly informative. Using direct calculations, we find that, for small $\eta$, $x=1+(1/9)\eta^2+O(\eta^3)$ while $\partial x/\partial \eta = (2/9)\eta+O(\eta^2)$. Consequently we find, for small $\eta=\eta_{B}(T)$,
\begin{equation}
 \frac{\partial\hat{\Pi}}{\partial\eta_{B}(T)}=3e^{-rT}\sigma_{B}(T)(\Phi(d_{21})-\Phi(d_{22}))\times \frac{1}{\eta^2}
 +e^{-rT}\sigma_{B}(T)\phi(d_{21})\times\frac{1}{\eta}+O(1).
\end{equation}
The constants in front of the terms $1/\eta^2$ and $1/\eta$ are not dependent on $\eta$. This means that the basket option price is very sensitive to the basket's skewness $\eta$ in the limit of small skewness. Thus the transition in pricing from baskets with  symmetric distributions to baskets with non-zero skews occurs rather abruptly. 

This sensitivity shows that the use of a method which calibrates the basket skewness directly, such as a moment matching method as we adopt in this paper, would most likely be relevant in practice, for instance, for basket spreads with more than two assets, such as soy crush spreads, or for baskets whose calls cannot be interpreted as ordinary exchange options, such as oil crack spreads. Indeed those spreads are skewed though not heavily so, falling in a range where sensitivity to the skew is very significant.


\section{Time Changed Models}
The above section presents a compact, closed -form 
approximation formula for basket option prices under log-normal models, improving on the results in \cite{Borovkova}. The salient 
feature of the approach in the above section is to match  baskets moments in the presence of potentially negative and positive weights. These  weights of both signs result in random variable models for the baskets that may take both negative and positive values with positive probability. As mentioned, this moment-matching approach produces quite accurate closed-form approximations for basket option prices under realistic log-normal models. 

The purpose of the current section is to investigate the outcome of the moment-matching approach employed in the above section to other classes of models. In this section we restrict our attention to models obtained by time-changing a Brownian motion without a drift, using various subordinators. By leaving the case of a time-changed model with drift to another study, we get to the core of the issue of time-changed log-normal models, avoiding additional burdens on the precision of the closed form approximation formulas.  Calculations which are similar to what is carried out in this section can be performed for Brownian models with non-zero drift. However, we performed extensive numerical tests in the cases of non-zero drifts, showing that the resulting approximations are not necessarily numerically robust.

We begin with a few definitions. An $n$-dimensional random variable $X$ is said to have a normal variance mixture (NVM) structure if
\begin{equation}\label{1}
X\overset{d}{=}\mu+\sqrt{Y}AN^{n},
\end{equation}
where $\mu\in R^{n}$ is a constant vector, $A$ is an
$n\times n$ matrix,  $N^{n}$ is an $n-$dimensional standard normal random variable, 
and $Y$ is a non-negative scalar random variable independent from $N^n$.

Let $B_t$ be an $n-$dimensional Brownian motion with covariance matrix $\Sigma$ and let $T_t$ be any subordinator (i.e. any a.s. positive increasing process), then, at each time point $t>0$, the variable $\mu+B_{T_t}$ is equal in distribution  to a NVM distribution as in (\ref{1}),  see Barndorff-Nielsen\cite{Barndorff},  Eberlein and Keller\cite{Eberlein}, Bingham and Kiesel\cite{Bingham}, Schoutens\cite{Schoutens}, Prause\cite{Prause}, Raible\cite{Raible}. More specifically, we have
\begin{equation}
 \mu+B_{T_t}\overset{d}{=}\mu+\sqrt{T_t}AN^n,\;   t>0,
\end{equation}
where $A=\Sigma^{\frac{1}{2}}$. 

In this section we consider $n$ risky assets with price processes of the form $S_i(t)=S_i(0)e^{\nu_it+L_i(t)}$, $ i=1, 2, \cdots, n$, where $L(t)=(L_1(t), L_2(t), \cdots, L_n(t))$ is given by $L(t)=\mu+B_{Y_t}$ for some vector $\mu \in R^n$, some subordinator $Y_t$, some $n$ dimensional Brownian motion $B$ with co-variance matrix $\Sigma$. Note that the same $Y$ is used to subordinate all components of $B$. The  parameters $\nu_i, i=1, 2, \cdots, n$ can be computed via the martingale condition (see equation (4) of Jurczenko et al.\cite{Emmanuel}),
\begin{equation}\label{29}
\begin{split}
 E[e^{-rt}S_i(t)]=S_{i}(0),\quad i=1,\cdots,n,
\end{split}
\end{equation}
Denoting the $i$th row vector of $A=\Sigma^{\frac{1}{2}}$ by $A_i$ (where by definition $A^T A=\Sigma$), and the variance of $A_i^TB_1$ by $\sigma_i^2$ for each $i=1,2, \cdots, n$, from (\ref{29}) we get explicit expressions for $\nu_i, i=1, 2, \cdots, n$ as follows
\begin{equation}
\begin{split}
 \nu_{i}&=-\mu_{i}-\frac{1}{t}\ln(E[e^{\sigma_{i}\sqrt{Y_{t}}N}])\\
 &=-\mu_{i}-\frac{1}{t}\ln\left(E\left[E[e^{\sigma_{i}\sqrt{Y_{t}}N}|Y_{t}]\right]\right)\\
 &=-\mu_{i}-\frac{1}{t}\ln\left(E\left[e^{\frac{\sigma_{i}^{2}}{2}Y_{t}}\right]\right)\\
 &=-\mu_{i}-\frac{1}{t}\ln\left(\phi_{Y_{t}}(\frac{\sigma_{i}^{2}}{2})\right),\quad i=1,\cdots,n,
\end{split}
\end{equation}
where $\phi_{Y_{t}}(\cdot)$ is the moment generating function of $Y_{t}$. 

Therefore, the marginal distribution at time $t$ of the $i$th risky asset can be expressed as 
\begin{equation}
\begin{split}
 S_{i}(t)\overset{d}{=}\frac{S_{i}(0)}{\phi_{Y_{t}}(\frac{\sigma_{i}^{2}}{2})}e^{rt+\sigma_{i}Y_{t}N_i}\quad i=1,\cdots,n,
\end{split}
\end{equation}
where $N_i$ is the $i$th component of an $n-$dimensional standard Normal vector $N_n$. Our goal is to obtain an approximate closed-form formula for the basket price $\Pi=e^{-rT}E(B(T)-K)^+$, where
\begin{equation}\label{111}
\begin{split}
 B(T)=\sum_{i=1}^{n}\frac{w_{i}S_{i}(0)}{\phi_{Y_{T}}(\frac{\sigma_{i}^{2}}{2})}e^{rT+\sigma_{i}\sqrt{Y_{T}}N_{i}}.
\end{split}
\end{equation}
For this purpose, we would like to approximate $B(T)$ by some random variable. Following the idea of the approach of Borovkova et al.\cite{Borovkova}, we propose the random variable
\begin{equation}
\begin{split}
 W=c(e^{s\sqrt{Y_{T}}N+m}+\tau)
\end{split}
\end{equation}
to approximate the basket $B(T)$ in (\ref{111}). This model can be labeled as a  log-normal shape-mixture model. First, we specify the possible  value ranges of these parameters: since $N$ is symmetric about the origin, we can assume $s\in R_{+}$, parameters $m,\tau\in R$ can be any real numbers, and $c$ is the sign parameter, which can only be $1$ or $-1$.

Next, we will explain how to use the random variable $W=c(e^{s\sqrt{Y_{T}}N+m}+\tau)$ to approximate the basket $B(T)$ by the moment-matching method. First, we give the formulas of the first three moments of the basket distribution.
\begin{lemma}\label{p1}
Suppose the basket of assets has a price given by
\begin{equation}
\begin{split}
 B(T)=\sum_{i=1}^{n}\frac{w_{i}S_{i}(0)}{\phi_{Y_{T}}(\frac{\sigma_{i}^{2}}{2})}e^{rT+\sigma_{i}\sqrt{Y_{T}}N_{i}}
\end{split}
\end{equation}
where $N$ has correlation matrix $\Sigma$, and $\phi_{Y_{T}}(\cdot)$ is the moment-generating function of $Y_{T}$. Then the first three moments of the basket distribution are
\begin{equation}
\begin{split}
 E[B(T)]&=e^{rT}\sum_{i=1}^{n}w_{i}S_{i}(0)\\
 E[(B(T))^{2}]&=e^{2rT}\sum_{i=1}^{n}\sum_{j=1}^{n}\frac{\phi_{Y_{T}}(\frac{1}{2}\sigma_{i}^{2}+\rho_{ij}\sigma_{i}\sigma_{j}+\frac{1}{2}\sigma_{j}^{2})}
 {\phi_{Y_{T}}(\frac{1}{2}\sigma_{i}^{2})\phi_{Y_{T}}(\frac{1}{2}\sigma_{j}^{2})}w_{i}w_{j}S_{i}(0)S_{j}(0)\\
 E[(B(T))^{3}]&=e^{3rT}\sum_{i=1}^{n}\sum_{j=1}^{n}\sum_{k=1}^{n}\frac{\phi_{Y_{T}}(\frac{1}{2}\sigma_{i}^{2}+\frac{1}{2}\sigma_{j}^{2}+\frac{1}{2}\sigma_{k}^{2}+\rho_{ij}\sigma_{i}\sigma_{j}+\rho_{ik}\sigma_{i}\sigma_{k}+\rho_{jk}\sigma_{j}\sigma_{k})}
 {\phi_{Y_{T}}(\frac{1}{2}\sigma_{i}^{2})\phi_{Y_{T}}(\frac{1}{2}\sigma_{j}^{2})\phi_{Y_{T}}(\frac{1}{2}\sigma_{k}^{2})}\\
 \times& w_{i}w_{j}w_{k}S_{i}(0)S_{j}(0)S_{k}(0).
\end{split}
\end{equation}
\end{lemma}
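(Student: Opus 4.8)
The plan is to expand each power of $B(T)$ into a multi-index sum and to evaluate the resulting exponential expectations by conditioning on the subordinator $Y_T$, exploiting that the standard normal vector $N=(N_1,\dots,N_n)$ is independent of $Y_T$. The crucial observation is that, conditionally on $Y_T$, each exponent of the form $\sqrt{Y_T}\sum_\ell \sigma_{i_\ell}N_{i_\ell}$ is a centered Gaussian, so the inner conditional expectation is just a Gaussian moment generating function evaluation, and the subsequent outer expectation over $Y_T$ reproduces $\phi_{Y_T}$ by the very definition of the moment generating function. This is the exact subordinated analogue of Lemma \ref{pp1}, with the fixed log-normal factors $e^{\rho_{ij}\sigma_i\sigma_j T}$ there replaced by the mixing transform $\phi_{Y_T}$ here.

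First I would treat the first moment. Writing
\[
E[B(T)]=e^{rT}\sum_{i=1}^n \frac{w_i S_i(0)}{\phi_{Y_T}(\tfrac{1}{2}\sigma_i^2)}\, E\!\left[e^{\sigma_i\sqrt{Y_T}N_i}\right],
\]
I would condition on $Y_T$ to obtain $E[e^{\sigma_i\sqrt{Y_T}N_i}\mid Y_T]=e^{\frac{1}{2}\sigma_i^2 Y_T}$, and then take the outer expectation, which gives $E[e^{\frac{1}{2}\sigma_i^2 Y_T}]=\phi_{Y_T}(\tfrac{1}{2}\sigma_i^2)$. This cancels exactly against the normalizing denominator, leaving $e^{rT}\sum_i w_i S_i(0)$, as claimed.

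For the second and third moments the mechanism is identical, the only extra ingredient being the variance of a linear combination of correlated standard normals. Expanding $(B(T))^2=\sum_{i,j}(\cdots)$ and $(B(T))^3=\sum_{i,j,k}(\cdots)$, each summand carries a factor $e^{\sqrt{Y_T}(\sigma_i N_i+\sigma_j N_j)}$ or $e^{\sqrt{Y_T}(\sigma_i N_i+\sigma_j N_j+\sigma_k N_k)}$. Conditioning on $Y_T$ and using that $\sigma_i N_i+\sigma_j N_j$ is centered Gaussian with variance $\sigma_i^2+\sigma_j^2+2\rho_{ij}\sigma_i\sigma_j$ (and the triple with the three cross terms $2\rho_{ij}\sigma_i\sigma_j+2\rho_{ik}\sigma_i\sigma_k+2\rho_{jk}\sigma_j\sigma_k$), the inner expectation becomes $e^{\frac{1}{2}Y_T(\sigma_i^2+\sigma_j^2+2\rho_{ij}\sigma_i\sigma_j)}$ and its three-index analogue. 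Taking the outer expectation over $Y_T$ and reading off the definition of $\phi_{Y_T}$ produces the stated numerators $\phi_{Y_T}(\tfrac{1}{2}\sigma_i^2+\tfrac{1}{2}\sigma_j^2+\rho_{ij}\sigma_i\sigma_j)$ and $\phi_{Y_T}(\tfrac{1}{2}\sigma_i^2+\tfrac{1}{2}\sigma_j^2+\tfrac{1}{2}\sigma_k^2+\rho_{ij}\sigma_i\sigma_j+\rho_{ik}\sigma_i\sigma_k+\rho_{jk}\sigma_j\sigma_k)$, while the denominators are the normalizing constants carried along from each factor of $B(T)$.

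There is no serious analytical obstacle here; the work is essentially careful bookkeeping. The one point requiring attention is the correct halving of variances: one must track the factor $\tfrac{1}{2}$ from the Gaussian moment generating function so that diagonal contributions appear as $\tfrac{1}{2}\sigma_i^2$ while each distinct off-diagonal pair contributes a full $\rho_{ij}\sigma_i\sigma_j$ (the symmetric cross term $2\rho_{ij}\sigma_i\sigma_j$ being halved). Beyond this, and the implicit standing assumption that $\phi_{Y_T}$ is finite at the relevant arguments so that all expectations converge, the calculation is routine.
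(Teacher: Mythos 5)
Your proposal is correct and follows essentially the same route as the paper's own proof: expand each power of $B(T)$ into a multi-index sum, condition on $Y_T$ so that the inner expectation is a Gaussian moment generating function of the correlated linear combination $\sigma_i N_i + \sigma_j N_j + \cdots$, and then recognize the outer expectation over $Y_T$ as $\phi_{Y_T}$ evaluated at half the conditional variance, with the normalizing denominators carried along (and cancelling entirely in the first moment). Your bookkeeping of the halved diagonal terms $\tfrac{1}{2}\sigma_i^2$ versus the full off-diagonal terms $\rho_{ij}\sigma_i\sigma_j$ matches the paper exactly, so nothing further is needed.
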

\begin{proof}
The proof is straightforward.
\end{proof}

As in the case of log-normal models, we calculate the skewness $\eta_{B}(T)$ of the basket distribution and we set $c=1$ if $\eta_{B}(T)>0$, and $c=-1$ if $\eta_{B}(T)<0$. Next, we can calculate the first three moments of the approximating variable $W$.
\begin{lemma}\label{p2}
Suppose the approximating variable is $W=c(e^{s\sqrt{Y_{T}}N+m}+\tau)$, then if $\eta_{B}(T)>0$, the first three moments of $W$ are
\begin{equation}
\begin{split}
 &M_{1}(T):=E[W]=e^{m}\phi_{Y_{T}}(\frac{1}{2}s^{2})+\tau\\
 &M_{2}(T):=E[W^{2}]=e^{2m}\phi_{Y_{T}}(2s^{2})+2\tau e^{m}\phi_{Y_{T}}(\frac{1}{2}s^{2})+\tau^{2}\\
 &M_{3}(T):=E[W^{3}]=e^{3m}\phi_{Y_{T}}(\frac{9}{2}s^{2})+3\tau e^{2m}\phi_{Y_{T}}(2s^{2})+3\tau^{2}e^{m}\phi_{Y_{T}}(\frac{1}{2}s^{2})+\tau^{3},
\end{split}
\end{equation}
while if $\eta_{B}(T)<0$, the first three moments of $W$ are
\begin{equation}
\begin{split}
 &M_{1}(T):=E[W]=-e^{m}\phi_{Y_{T}}(\frac{1}{2}s^{2})-\tau\\
 &M_{2}(T):=E[W^{2}]=e^{2m}\phi_{Y_{T}}(2s^{2})+2\tau e^{m}\phi_{Y_{T}}(\frac{1}{2}s^{2})+\tau^{2}\\
 &M_{3}(T):=E[W^{3}]=-e^{3m}\phi_{Y_{T}}(\frac{9}{2}s^{2})-3\tau e^{2m}\phi_{Y_{T}}(2s^{2})-3\tau^{2}e^{m}\phi_{Y_{T}}(\frac{1}{2}s^{2})-\tau^{3}
\end{split}
\end{equation}
where in both cases, $\phi_{Y_{T}}(\cdot)$ is the moment generating function of $Y_{T}$.
\end{lemma}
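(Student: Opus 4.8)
The plan is to compute each moment $M_j(T)=E[W^j]$, $j=1,2,3$, by the tower property: condition first on the subordinator value $Y_T$, then integrate against its law. Since $N$ is independent of $Y_T$, conditionally on $Y_T$ the variable $s\sqrt{Y_T}\,N$ is centered normal with variance $s^2 Y_T$, so for every nonnegative integer $k$ the conditional Laplace transform of the normal gives
\begin{equation*}
E\big[e^{k s\sqrt{Y_T}\,N}\,\big|\,Y_T\big]=e^{\frac12 k^2 s^2 Y_T}.
\end{equation*}
Integrating over $Y_T$ and recognizing the moment generating function $\phi_{Y_T}(u)=E[e^{uY_T}]$ yields the single building block
\begin{equation*}
E\big[e^{k s\sqrt{Y_T}\,N}\big]=\phi_{Y_T}\big(\tfrac12 k^2 s^2\big),
\end{equation*}
valid whenever the right-hand side is finite, which I would assume at the points $k=1,2,3$ (i.e. at $\tfrac12 s^2$, $2s^2$, $\tfrac92 s^2$) so that the interchange of expectations is justified by Fubini/Tonelli.

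First I would treat the case $\eta_B(T)>0$, where $c=1$ and $W=e^{s\sqrt{Y_T}N+m}+\tau$. Expanding $W^j$ by the binomial theorem produces a sum of terms $\binom{j}{k}\tau^{\,j-k}e^{km}e^{k s\sqrt{Y_T}N}$, and applying the building block turns the $k$th term into $\binom{j}{k}\tau^{\,j-k}e^{km}\phi_{Y_T}(\tfrac12 k^2 s^2)$ (using $\phi_{Y_T}(0)=1$ for the pure powers of $\tau$). Collecting terms reproduces the three displayed expressions for $M_1(T),M_2(T),M_3(T)$; the only bookkeeping to track is the quadratic argument of $\phi_{Y_T}$, since $k=1,2,3$ give $\tfrac12 s^2$, $2s^2$, and $\tfrac92 s^2$ respectively.

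For the case $\eta_B(T)<0$ we have $c=-1$ and $W=-\big(e^{s\sqrt{Y_T}N+m}+\tau\big)$. Writing $V:=e^{s\sqrt{Y_T}N+m}+\tau$, whose moments are exactly those computed in the previous paragraph, the identities $E[W]=-E[V]$, $E[W^2]=E[V^2]$, and $E[W^3]=-E[V^3]$ immediately give the stated sign pattern: the odd moments are negated and the even moment is unchanged.

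I do not expect a genuine obstacle, since the lemma reduces to a conditional-expectation computation built on a single normal Laplace-transform identity. The only points requiring care are keeping the quadratic scaling $k\mapsto\tfrac12 k^2 s^2$ correct inside the argument of $\phi_{Y_T}$, and handling the overall sign $c$ consistently so that it cancels in the even moment but survives in the odd ones.
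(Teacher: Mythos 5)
Your proposal is correct and follows essentially the same route as the paper's own proof: expand $W^j$ (the paper does this term by term, you via the binomial theorem), then evaluate each term $E\left[e^{ks\sqrt{Y_T}N}\right]$ by conditioning on $Y_T$, applying the normal moment generating function identity, and recognizing $\phi_{Y_T}\left(\tfrac12 k^2 s^2\right)$, with the sign $c=-1$ handled exactly as you do for the negative-skewness case. Your explicit mention of the Fubini/Tonelli justification and finiteness of $\phi_{Y_T}$ at $\tfrac12 s^2$, $2s^2$, $\tfrac92 s^2$ is a small point of care the paper leaves implicit.
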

\begin{proof}
The calculations are straightforward.
\end{proof}

After getting the first three moments of the basket distribution and of the approximating (log-normal shape-mixture) random variable, as in the pure log-normal case, the strategy is to match these moments as follows
\begin{equation}\label{38}
\begin{split}
 E[B(T)]&=M_{1}(T)\\
 E[(B(T))^{2}]&=M_{2}(T)\\
 E[(B(T))^{3}]&=M_{3}(T)
\end{split}
\end{equation}
and to obtain a solution for the three parameters $m,s,\tau$ that allows this match, and which provides both some numerical stability and a good approximation. 

Our method to compute a closed-form approximation of the basket option price is first to propose a solution to (\ref{38}) for the parameters $c, s, m, \tau$, which we do in the following Lemma, and then to use those parameters in a risk-neutral pricing formula.

\begin{lemma}\label{ppp1}
A set of four parameters $c, s, m, \tau,$ which satisfy (\ref{38}) is given by
\begin{equation}
\begin{split}
 &c=\mathrm{sgn}(\eta_{B}(T)),\quad s=\sqrt{x},\quad m=\frac{1}{2}\ln(\frac{\sigma^{2}_{B}(T)}{\phi_{Y_{T}}(2x)-(\phi_{Y_{T}}(\frac{1}{2}x))^{2}}),\\ &\tau=\mathrm{sgn}(\eta_{B}(T))\mu_B(T)-\frac{\phi_{Y_{T}}(\frac{1}{2}x)}{(\phi_{Y_{T}}(2x)-(\phi_{Y_{T}}(\frac{1}{2}x))^{2})^{\frac{1}{2}}}\sigma_{B}(T),
\end{split}
\end{equation}
where $\mathrm{sgn}(\cdot)$ is the signum function, $\phi_{Y_{T}}(\cdot)$ is the moment-generating function of $Y_{T}$,  and $x$ is the strictly positive real root of
\begin{equation}\label{39}
\begin{split}
 \phi_{Y_{T}}(\frac{9}{2}x)-3\phi_{Y_{T}}(\frac{1}{2}x)\phi_{Y_{T}}(2x)+2(\phi_{Y_{T}}(\frac{1}{2}x))^{3}-|\eta_{B}(T)|(\phi_{Y_{T}}(2x)-(\phi_{Y_{T}}(\frac{1}{2}x))^{2})^{\frac{3}{2}}=0,
\end{split}
\end{equation}
that satisfies  $\phi_{Y_{T}}(2x)-(\phi_{Y_{T}}(\frac{1}{2}x))^{2}\neq 0$.
\end{lemma}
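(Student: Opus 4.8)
The plan is to verify directly that the quadruple $(c,s,m,\tau)$ displayed in the statement solves the moment-matching system (\ref{38}). The key simplification I would exploit is that matching the first three \emph{raw} moments of $W$ and $B(T)$ is equivalent to matching their means, their variances, and their third \emph{central} moments; this is the standard one-to-one correspondence between $(EX,EX^{2},EX^{3})$ and $(EX,\,\Var X,\,E[(X-EX)^{3}])$. Working with central moments is advantageous because, writing $W=c(Z+\tau)$ with $Z:=e^{s\sqrt{Y_{T}}N+m}$, both $\Var W$ and $E[(W-EW)^{3}]$ are independent of the location parameter $\tau$. This decouples the system, so that $m$ and $s$ (equivalently $x=s^{2}$) can be pinned down first and $\tau$ read off last.

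Concretely, I would first record, by conditioning on $Y_{T}$ and using $E[e^{a\sqrt{Y_{T}}N}\mid Y_{T}]=e^{a^{2}Y_{T}/2}$, the three quantities $E[Z]=e^{m}\phi_{Y_{T}}(\tfrac12 s^{2})$, $E[Z^{2}]=e^{2m}\phi_{Y_{T}}(2s^{2})$, and $E[Z^{3}]=e^{3m}\phi_{Y_{T}}(\tfrac92 s^{2})$, which already underlie Lemma \ref{p2}. Abbreviating $\phi_{1}:=\phi_{Y_{T}}(\tfrac12 x)$, $\phi_{2}:=\phi_{Y_{T}}(2x)$, $\phi_{3}:=\phi_{Y_{T}}(\tfrac92 x)$ and setting $s^{2}=x$, a short computation then gives
\begin{equation*}
\Var W=\Var Z=e^{2m}(\phi_{2}-\phi_{1}^{2}),\qquad E[(W-EW)^{3}]=c\,e^{3m}\bigl(\phi_{3}-3\phi_{1}\phi_{2}+2\phi_{1}^{3}\bigr),
\end{equation*}
where the factor $c$ in the third central moment arises from $W-EW=c(Z-EZ)$ together with $c^{3}=c$, while $c$ drops out of the variance because $c^{2}=1$.

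With these in hand I would solve the three matching equations in sequence. The variance equation $\Var W=\sigma_{B}^{2}(T)$ forces $e^{2m}=\sigma_{B}^{2}(T)/(\phi_{2}-\phi_{1}^{2})$, which is exactly the stated $m$; here $\phi_{2}-\phi_{1}^{2}>0$ by the strict Cauchy--Schwarz (Jensen) inequality applied to the nondegenerate variable $e^{s\sqrt{Y_{T}}N}$, so the logarithm and the square roots below are well defined and the stipulated condition $\phi_{2}-\phi_{1}^{2}\neq0$ holds automatically. Substituting $e^{3m}=\sigma_{B}^{3}(T)/(\phi_{2}-\phi_{1}^{2})^{3/2}$ into the third-central-moment equation $c\,e^{3m}(\phi_{3}-3\phi_{1}\phi_{2}+2\phi_{1}^{3})=\eta_{B}(T)\sigma_{B}^{3}(T)$ and multiplying through by $c$, so that $c\,\eta_{B}(T)=|\eta_{B}(T)|$ under the choice $c=\mathrm{sgn}(\eta_{B}(T))$, turns it after cancelling $\sigma_{B}^{3}(T)$ into precisely equation (\ref{39}) for $x$. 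Finally the mean equation $E[W]=c(e^{m}\phi_{1}+\tau)=\mu_{B}(T)$ yields $\tau=c\,\mu_{B}(T)-e^{m}\phi_{1}=\mathrm{sgn}(\eta_{B}(T))\mu_{B}(T)-\phi_{1}\sigma_{B}(T)/(\phi_{2}-\phi_{1}^{2})^{1/2}$, the stated value.

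The step I expect to require the most care is the consistent treatment of the sign $c$: since the sign of the third central moment of $W$ equals $c$, matching it to the sign of $\eta_{B}(T)$ forces $c=\mathrm{sgn}(\eta_{B}(T))$, and one must check that with this single choice the two branches of Lemma \ref{p2} (the cases $\eta_{B}(T)>0$ and $\eta_{B}(T)<0$) both collapse onto the same equation (\ref{39}) written with $|\eta_{B}(T)|$. Note finally that existence of a strictly positive real root $x$ of (\ref{39}) is not part of this algebraic verification; it is a property of the chosen mixing distribution, encoded in its moment generating function $\phi_{Y_{T}}$, and is examined case by case in Section 4.
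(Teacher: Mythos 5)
Your proposal is correct and follows essentially the same route as the paper's proof: both reduce the raw-moment match to matching the mean, variance, and third central moment (the paper forms $M_{2}(T)-M_{1}(T)^{2}$ and $M_{3}(T)-3M_{1}(T)M_{2}(T)+2M_{1}(T)^{3}$ explicitly), substitute $e^{m}$ from the variance equation to arrive at equation (\ref{39}), and then read off $m$ and $\tau$. The only differences are organizational: you treat both signs of $c$ at once via $c^{2}=1$, $c^{3}=c$, where the paper splits into the cases $\eta_{B}(T)>0$ and $\eta_{B}(T)<0$ (working out the first and declaring the second similar), and you add the correct observation that $\phi_{Y_{T}}(2x)-(\phi_{Y_{T}}(\tfrac{1}{2}x))^{2}>0$ holds automatically for $x>0$, being the variance of the nondegenerate variable $e^{\sqrt{x}\sqrt{Y_{T}}N}$, which the paper instead imposes as a side condition.
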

\begin{proof}
See the Appendix.
\end{proof}

With these parameters in hand, we can calculate the option price by using the approximating (log-normal shape-mixture) random variable.

As in the pure log-normal case, we write the approximate basket price in cases depending on the sign of the skewness, because its sign determines the value of $c=\pm1$, as follows

\begin{equation}
\bar{\Pi}=: \left\{
\begin{array}{ll}
e^{-rT}E\left[\left(e^{s\sqrt{Y_{T}}N+m}+\tau-K\right)^{+}\right],\quad &\multirow{1}*{$\eta_{B}(T)>0$},\\
\specialrule{0em}{1ex}{1ex}
e^{-rT}E\left[\left(-e^{s\sqrt{Y_{T}}N+m}-\tau-K\right)^{+}\right],\quad &\multirow{1}*{$\eta_{B}(T)<0$}.
\end{array}\right.
\end{equation}

Then following a similar analysis as in the pure log-normal case, we obtain the approximate closed-form formula for the basket call option price. The following theorem results.
\begin{theorem}\label{p3}
The basket call option price can be approximated as follows
\begin{enumerate}
\item [a)] When the basket's skewness $\eta_{B}(T)>0$, then the basket's call option price is approximated as
\begin{equation}\label{p3_1}
\begin{split}
 \bar{\Pi}= e^{-rT}E\left[\left(e^{s\sqrt{Y_{T}}N+m}+\tau-K\right)^{+}\right].
\end{split}
\end{equation}
If $K\leq\tau$, then that approximate pricing formula becomes
\begin{equation}\label{p3_2}
\begin{split}
 \bar{\Pi}= e^{-rT}(\phi_{Y_{T}}(\frac{1}{2}s^{2})+\tau-K),
\end{split}
\end{equation}
where $\phi_{Y_{T}}(\cdot)$ is the moment generating function of $Y_{T}$. If $K>\tau$, then that approximate pricing formula becomes
\begin{equation}\label{p3_3}
\begin{split}
 \bar{\Pi}= e^{-rT}\left(E\left[e^{\frac{1}{2}s^{2}Y_{T}+m}\Phi(d_{11}(Y_{T}))\right]-(K-\tau)E\left[\Phi(d_{12}(Y_{T}))\right]\right),
\end{split}
\end{equation}
where the expectation is respect to the distribution of $Y_{T}$ and
\begin{equation}\label{p3_4}
\begin{split}
 d_{11}(Y_{T})&=\frac{-\ln(K-\tau)+m+s^{2}Y_{T}}{s\sqrt{Y_{T}}},\\
 d_{12}(Y_{T})&=\frac{-\ln(K-\tau)+m}{s\sqrt{Y_{T}}}.
\end{split}
\end{equation}

\item [b)]  When the basket's skewness $\eta_{B}(T)<0$, then the basket's call option price is approximated as
\begin{equation}\label{p4_1}
\begin{split}
 \bar{\Pi}= e^{-rT}E\left[\left(-e^{s\sqrt{Y_{T}}N+m}-\tau-K\right)^{+}\right].
\end{split}
\end{equation}
If $K\geq-\tau$, then that approximate pricing formula becomes
\begin{equation}\label{p4_2}
\begin{split}
 \bar{\Pi}= 0,
\end{split}
\end{equation}
If $K<-\tau$, then that approximate pricing formula becomes
\begin{equation}\label{p4_3}
\begin{split}
 \bar{\Pi}= e^{-rT}\left(-E\left[e^{\frac{1}{2}s^{2}Y_{T}+m}\Phi(d_{21}(Y_{T}))\right]+(-K-\tau)E\left[\Phi(d_{22}(Y_{T}))\right]\right),
\end{split}
\end{equation}
where the expectation is respect to the distribution of $Y_{T}$ and
\begin{equation}\label{p4_4}
\begin{split}
 d_{21}(Y_{T})&=\frac{\ln(-K-\tau)-m-s^{2}Y_{T}}{s\sqrt{Y_{T}}},\\
 d_{22}(Y_{T})&=\frac{\ln(-K-\tau)-m}{s\sqrt{Y_{T}}}.
\end{split}
\end{equation}

\end{enumerate}
\end{theorem}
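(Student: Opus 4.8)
The plan is to reduce the time-changed pricing problem to the pure log-normal case of Theorem \ref{pp3} by conditioning on the subordinator value $Y_T$. The central observation is that, conditionally on $\{Y_T = y\}$, the approximating variable $W = c(e^{s\sqrt{Y_T}N + m} + \tau)$ is exactly a shifted and scaled log-normal of the form treated in Section 2, with shape parameter $s$ replaced by $s\sqrt{y}$ and $N$ still standard normal, thanks to the independence of $N$ and $Y_T$. Writing $\bar\Pi = e^{-rT}E[(W-K)^+]$ and using the tower property, $\bar\Pi = e^{-rT}E[g(Y_T)]$, where $g(y) := E[(c(e^{s\sqrt y N + m}+\tau)-K)^+]$ is precisely the conditional price already computed in the log-normal setting.

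First I would fix the sign $c = \mathrm{sgn}(\eta_B(T))$ as in Lemma \ref{ppp1} and split into the cases $c = 1$ and $c = -1$. For $c = 1$ I would note that whether the inner positive-part operator can be dropped depends only on the comparison of $K$ with $\tau$, not on $y$: if $K \le \tau$ then $e^{s\sqrt y N + m} + \tau - K \ge 0$ for every $y$ and every value of $N$, so $g(y) = e^{m + \frac12 s^2 y} + \tau - K$, and integrating over $Y_T$ with $E[e^{\frac12 s^2 Y_T}] = \phi_{Y_T}(\frac12 s^2)$ gives (\ref{p3_2}). If instead $K > \tau$, then conditionally I would apply the standard log-normal call formula with strike $K - \tau > 0$ and volatility parameter $s\sqrt y$, yielding $g(y) = e^{m+\frac12 s^2 y}\Phi(d_{11}(y)) - (K-\tau)\Phi(d_{12}(y))$ with $d_{11}(y), d_{12}(y)$ as in (\ref{p3_4}); taking the outer expectation over $Y_T$ produces (\ref{p3_3}).

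The case $c = -1$ is handled symmetrically. Here $-e^{s\sqrt y N + m} - \tau - K$ is nonpositive for all $y$ and $N$ precisely when $K \ge -\tau$, giving $g \equiv 0$ and hence (\ref{p4_2}). When $K < -\tau$, the event $\{-e^{s\sqrt y N + m} - \tau - K > 0\}$ equals $\{N < d_{22}(y)\}$ with $d_{22}(y)$ as in (\ref{p4_4}), so $g(y) = (-K-\tau)\Phi(d_{22}(y)) - E[e^{s\sqrt y N + m}\mathbf{1}_{\{N<d_{22}(y)\}}]$. The partial log-normal moment follows from the Gaussian shift identity $E[e^{aN}\mathbf{1}_{\{N<b\}}] = e^{a^2/2}\Phi(b-a)$ with $a = s\sqrt y$ and $b = d_{22}(y)$, together with the simplification $d_{22}(y) - s\sqrt y = d_{21}(y)$, giving $g(y) = (-K-\tau)\Phi(d_{22}(y)) - e^{m+\frac12 s^2 y}\Phi(d_{21}(y))$; the outer expectation over $Y_T$ then yields (\ref{p4_3}).

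Once the conditioning is set up the individual computations are routine, so there is no single hard step; the one point requiring care, and the feature that makes the whole scheme work, is the observation that the case thresholds are the same for every realization of $Y_T$. The branching among the four formulas depends only on $\mathrm{sgn}(\eta_B(T))$ (hence on $c$) and on the comparison of $K$ with $\pm\tau$, neither of which involves $y$, so a single branch is selected globally and the remaining task is merely to integrate the $Y_T$-dependent expressions. The interchange of the positive-part expectation with the $Y_T$-expectation is justified by the independence of $N$ and $Y_T$ and Fubini's theorem, the relevant integrands being nonnegative or dominated by the integrable $e^{\frac12 s^2 Y_T}$, whose finiteness is exactly the condition that $\phi_{Y_T}(\frac12 s^2)$ be finite, already implicit in the moment-matching of Lemma \ref{ppp1}.
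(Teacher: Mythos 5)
Your proposal is correct and follows essentially the same route as the paper's proof: both arguments condition on $Y_{T}$ (using the independence of $N$ and $Y_{T}$ together with the key fact that the sign $c$ and the comparisons of $K$ with $\tau$ and $-\tau$ do not depend on the realization of $Y_{T}$), perform the Gaussian shift computation conditionally, and then integrate against the mixing law, your only organizational difference being that you invoke the log-normal formulas of Theorem \ref{pp3} conditionally with shape parameter $s\sqrt{y}$ rather than rederiving them with the conditioning carried inside each term as the paper does. One small remark: your (correct) computation in the case $c=1$, $K\leq\tau$ yields $\bar{\Pi}=e^{-rT}\bigl(e^{m}\phi_{Y_{T}}(\tfrac{1}{2}s^{2})+\tau-K\bigr)$, which agrees with what the paper's own proof obtains but exposes a typo in the stated formula (\ref{p3_2}), where the factor $e^{m}$ has been dropped.
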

\begin{proof}
See the Appendix.
\end{proof}

In the second part of Section 4, we investigate specific examples of mixing distributions for $Y_T$, including an exponential, a higher-order Gamma, and an inverse-Gamma distribution. These lead to semi-explicit expressions for the pricing formulas in the above theorem, whose algebraic equation \eqref{39} for the key parameter $x$ we present explicitly, from which all formulas are readily evaluated numerically.

\section{Numerical Results}
In this section, we examine the numerical performances of our formulas in the previous sections. 

First, we compare the performances of using the closed form  solutions of the parameters in Lemma \ref{pp2} directly in the approximate basket option pricing formula 
in Theorem \ref{pp3} versus calculating these parameters
from the non-linear system of equations in \cite{Borovkova} and implementing them on the approximate basket option pricing formula. The model parameters in  Table 1 below are exactly the same as the model parameters used in Table 3 of Borovkova et al.\cite{Borovkova}. Here the first row labeled as ``Borovkova'' is taken from the first row labeled as ``Our approach'' in Table 3 of Borovkova et al. \cite{Borovkova}. These are basket prices that are obtained by numerically solving their non-linear equations (10), (11), (12), at page 6 of their paper. The second row labeled as ``Ours'' represents the basket prices obtained by plugging in the closed form parameters in our Lemma \ref{pp2} into the closed form approximation formula in our Theorem \ref{pp3}. The basket prices in the third row labeled ``Monte-Carlo benchmark'' are taken from the fourth row of the Table 3 of Borovkova et al.\cite{Borovkova}. 


As can be seen from Table 1, our numbers are a close match to Borovkova et al. \cite{Borovkova}'s numbers (they are all but indistinguishable in four of the six baskets at the level of four significant digits), and both are excellent matches to the Monte-Carlo benchmark. But in the cases of those baskets where the discrepancy between these two approaches is visible at the level of the 4th significant digit (Basket 3 and Basket 6), our numbers are visibly closer to the Monte-Carlo benchmark than Borovkova et al. \cite{Borovkova}'s numbers.
\begin{table}[htbp]
\centering
\caption{\centering Comparison of Borovkova's method with the benchmark}
\begin{tabular}{c|cccccc}
  \hline
  \textbf{Method} & \textbf{Basket 1} & \textbf{Basket 2} & \textbf{Basket 3} & \textbf{Basket 4} & \textbf{Basket 5} & \textbf{Basket 6} \\
  \hline
  Borovkova & 7.751 & 16.910 & 10.844 & 1.958 & 7.759  & 9.026 \\
  \hline
  Ours & 7.751 & 16.911 & 10.828 & 1.958 & 7.759  &  9.021 \\
  \hline
    \tabincell{c}{Monte-Carlo\\ benchmark} & \tabincell{c}{7.744\\ (0.014)} & \tabincell{c}{16.757\\ (0.023)} & \tabincell{c}{10.821\\ (0.018)} & \tabincell{c}{1.966\\ (0.005)} & \tabincell{c}{7.730\\ (0.01)}  &  \tabincell{c}{9.022\\ (0.015)} \\
  \hline
\end{tabular}\\
The 4th line are benchmark standard deviations
\end{table}

Next we check the performance of our Theorem \ref{p3} on time-changed models. For this, we choose a unit horizon $T=1$ and for the mixing random variable $Y=Y_{1}$ we consider three choices:   
exponential, gamma, and inverse Gaussian distributions. Specifically, using scale (rather than rate) parametrization notation, our choices are  $Y\sim \text{Exp}(1)$, $Y\sim \Gamma(2,2)$, and $Y\sim \text{IG}(1,2)$. As stated in the introduction, here $\Gamma (\cdot, \cdot)$ corresponds to the mixing distribution of the marginal law of the subordinator for  symmetric (also asymmetric) variance gamma models \cite{madan-sieneta}, $IG(\cdot, \cdot)$ corresponds to the mixing distribution of the marginal law for  inverse Gassian L\'evy processes \cite{barndorff-nielsen}. The random variable $\text{Exp}(1)$ is equal to $\Gamma (1, 1)$ in law and a NMV model with a $\text{Exp}(1)$ mixing distribution is called asymmetric Laplace distribution in the past literature, 
see \cite{kozubowski-podgorski} for its financial applications. First, we will calculate the corresponding equations (\ref{39}) for these three random variables separately below.

A few words are in order before getting into the computational details, to explain why these choices of mixing variables are relevant to mathematical finance. The use of a time-changed model acknowledges that the effective business time for a specific stock or index does not necessarily coincide with ordinary time at which trades occur. When trading activity increases per unit of time, stock local volatility will increase, and since such periods cannot be predicted except statistically, the interpretation of a stock's intrinsic time as being random, and not directly observed, is appropriate. This is consistent with the notion that volatility is not constant, and is unpredictable. Stochastic volatility models are popular ways to translate this uncertainty. They present technical challenges in continuous-time modeling in the sense that tracking a stock's underlying spot volatility is not possible in practice. Using a time-changed model instead is an intermediate modeling solution where volatility is unpredictable but need not be tracked over time. 

The correspondence between random volatility and random time changes is straightforward when a single volatility variable is used: the time scaling of Brownian motion implies immediately that $Y$ represents squared random volatility if the latter is not time dependent. For a random volatility that switches across several random values over discrete time, the same scaling implies that $Y$ would be the sum of the squares of these random volatilities. In the widespread case of Gaussian volatility which is independent of the driving Brownian motion $W$ (the case of no stochastic volatility leverage), the variable $Y$ is independent of $W$ and is distributed as the sum of squares of normals, i.e. chi-squared distributions with degrees of freedom $d=n/2$ where $n$ is the number of times the random volatility changes over the investment horizon. Our choices of exponential and Gamma distributions correspond thus to $n=2$ and $n=4$. The case of the inverse-Gamma law for $Y$ is a static version of another popular choice for stochastic volatility, as seen in \cite{Nicolas_Geoffrey}.

Returning to our calculations, for $Y_{1}\sim \text{Exp}(1)$, we have $\phi_{Y_{1}}(s)=\frac{1}{1-s}$. In this case, Equation (\ref{39}) reduces to
\begin{equation}
\begin{split}
 \frac{1}{1-\frac{9}{2}x}-\frac{3}{(1-\frac{1}{2}x)(1-2x)}+\frac{2}{(1-\frac{1}{2}x)^{3}}-|\eta_{B}(T)|(\frac{1}{1-2x}-\frac{1}{(1-\frac{1}{2}x)^{2}})^{\frac{3}{2}}=0,
\end{split}
\end{equation}
which further simplifies to
\begin{equation}
\begin{split}
 (\frac{1}{4}x^{2}+\frac{7}{4}x+\frac{15}{2})(1-2x)^{\frac{1}{2}}x^{\frac{1}{2}}=|\eta_{B}(T)|(1-\frac{9}{2}x)(1+\frac{x}{4})^{\frac{3}{2}}.
\end{split}
\end{equation}

For $Y_{1}\sim \Gamma(2,2)$, we have $\phi_{Y_{1}}(s)=(\frac{2}{2-s})^{2}$. Then Equation (\ref{39}) becomes
\begin{equation}
\begin{split}
 (\frac{2}{2-\frac{9}{2}x})^{2}-3(\frac{2}{2-\frac{1}{2}x})^{2}(\frac{2}{2-2x})^{2}+2(\frac{2}{2-\frac{1}{2}x})^{6}=|\eta_{B}(T)|((\frac{2}{2-2x})^{2}-(\frac{2}{2-\frac{1}{2}x})^{4})^{\frac{3}{2}},
\end{split}
\end{equation}
which reduces to
\begin{equation}
\begin{split}
 &(4(2-\frac{1}{2}x)^{6}(1-x)^{2}-12(2-\frac{9}{2}x)^{2}(2-\frac{1}{2}x)^{4}+128(2-\frac{9}{2}x)^{2}(1-x)^{2})(1-x)\\
 =&|\eta_{B}(T)|(2-\frac{9}{2}x)^{2}((2-\frac{1}{2}x)^{4}-16(1-x)^{2})^{\frac{3}{2}}.
\end{split}
\end{equation}

Finally for $Y_{T}\sim \text{IG}(1,2)$, we get $\phi_{Y_{T}}(s)=e^{2(1-\sqrt{1-s})}$. In this case, Equation (\ref{39}) becomes
\begin{equation}\label{2ig}
\begin{split}
 e^{2(1-\sqrt{1-\frac{9}{2}x})}-3e^{2(1-\sqrt{1-\frac{1}{2}x})}e^{2(1-\sqrt{1-2x})}+2e^{6(1-\sqrt{1-\frac{1}{2}x})}=|\eta_{B}(T)|(e^{2(1-\sqrt{1-2x})}-e^{4(1-\sqrt{1-\frac{1}{2}x})})^{\frac{3}{2}}.
\end{split}
\end{equation}
We look for non-zero solutions of this equation. It also needs to satisfy $e^{2(1-\sqrt{1-2x})}-e^{4(1-\sqrt{1-\frac{1}{2}x})}\neq 0$ which corresponds to  $\phi_{Y_{T}}(2x)-(\phi_{Y_{T}}(\frac{1}{2}x))^{2}\neq 0$ in Lemma \ref{ppp1}. We solve $x$  numerically from
\begin{equation}
\begin{split}
 \frac{e^{2(1-\sqrt{1-\frac{9}{2}x})}-3e^{2(1-\sqrt{1-\frac{1}{2}x})}e^{2(1-\sqrt{1-2x})}+2e^{6(1-\sqrt{1-\frac{1}{2}x})}}{(e^{2(1-\sqrt{1-2x})}-e^{4(1-\sqrt{1-\frac{1}{2}x})})^{\frac{3}{2}}}=|\eta_{B}(T)|.
\end{split}
\end{equation}

We use Monte-Carlo simulations, with ten million iterations, as a benchmark in our numerical comparisons below. The choice of this many iterations is to result in Monte-Carlo uncertainties on the benchmark price whose magnitude is comfortably lower than the discrepancy between the Monte-Carlo's mean and the numerical method being tested. The Monte-Carlo uncertainties reported in Tables 3 through 6 comply well with this constraint, being two or three orders of magnitude smaller than the discrepancies, allowing us to report the latter comfortably as percentages with two significant digits. 


Table 2 provides the model parameters for six baskets whose prices we compute using our approximate formula with time-changed models. These scenarios include four baskets with two stocks and two baskets with three stocks. Note that these six Scenarios have appeared in
the papers Wu et al.\cite{Wu_Diao2019} (Scenarios 1-3)  and Leccadito et al.\cite{Leccadito} (Scenarios 4-6). In particular, the Strike line in Table 2 refers to moneyness of the basket $K/B(0)$ where the strike price is $K$ and $B(0)$ is the basket's initial price, for the first three scenarios, and it refers simply to the strike price $K$ for the last three. This is in accordance with the notation in the aforementioned two references.
\begin{table}[htbp]
\footnotesize
\centering
\caption{Different Basket Scenarios}
\begin{tabular}{c|cccccc}
  \hline
   & \textbf{Scenario 1} & \textbf{Scenario 2} & \textbf{Scenario 3} & \textbf{Scenario 4} & \textbf{Scenario 5} & \textbf{Scenario 6} \\
  \hline
  Initial Price & [100,120] & [150,100] & [110,90] & [200,50] & [95,90,105]  & [100,90,95] \\

  Volatilities & [0.2,0.3] & [0.3,0.2] & [0.3,0.2] & [0.1,0.15] & [0.2,0.3,0.25]  &  [0.25,0.3,0.2] \\

  Weights & [-1,1] & [-1,1] & [0.7,0.3] & [-1,1] & [1,-0.8,-0.5]  &  [0.6,0.8,-1] \\

  Correlations & $\rho_{12}=0.9$ & $\rho_{12}=0.3$ & $\rho_{12}=0.9$ & $\rho_{12}=0.8$ & \multicolumn{2}{c}{\tabincell{c}{$\rho_{12}=\rho_{23}=0.9$\\ $\rho_{13}=0.8$}} \\

  Strike & \multicolumn{3}{c}{Moneyness $\in\{0.8,0.9,1.0,1.1,1.2\}$} & $K=-140$ & $K=-30$  & $K=35$  \\

  Mixing & \multicolumn{6}{c}{$Y=\text{Exp}(1),\quad Y=\Gamma(2,2),\quad Y=\text{IG}(1,2)$} \\
  \hline
\end{tabular}\\
\begin{tabular}{l}
Notes: The risk-less interest rate $r$ equals to $0.03$. The initial price of the underlying assets are listed\\
in the first row. The volatilities, weights, and the coefficients of correlations of the assets are placed\\
in the second, third, forth row, respectively. The fifth row is the moneyness $\frac{K}{B(0)}$ or the strike price of\\
the basket options. The sixth row lies the different Mixing random variables.\\
\hline
\end{tabular}
\end{table}

To clarify the parameters defined in the Strike line of this table, for example, in Scenario 1, moneyness equals to 0.8 means the strike price $K=0.8\times B(0)=0.8\times(-100+120)=16$, moneyness equals to 0.9 means the strike price $K=0.9\times B(0)=18$, and so on. For Scenarios 2 and 3, the strike prices are computed using the same operation as in Scenario 1. The strike prices are given in the table in Scenarios 4 to 6. We chose these two different parametrizations to respect the notation in the two articles from whence these scenarios came.

We use two measures of errors,  the percentage of "good prices" ($C_{1}$) and the mean absolute percentage error ($C_{2}$), in our numerical comparisons as was done in the two papers  Leccadito et al.\cite{Leccadito} and Wu et al.\cite{Wu_Diao2019}. These measures are calculated as follows. Suppose there are $L$ (positive integer) cases in which one would like to check the accuracy of our formula. For instance, for scenarios 1, 2, and 3, since there are five moneynesses to check for each mixing distribution, that results in $L=5$ cases for each $Y$ in each scenario. For scenarios 4, 5, and 6, since they have single strike prices, we group each of the three scenarios by mixing distribution, resulting in $L=3$ for all three scenarios collectively. For each case $\ell =1,\ldots,L$, let $Val_{\ell}$ denote the outcome of our computational method and  $MC_{\ell}$ be the outcome of Monte-Carlo method for the same case. Then
\begin{equation}
\begin{split}
 C_{1}&=\frac{\sum_{\ell=1}^{L}I(|\frac{Val_{\ell}-MC_{\ell}}{MC_{\ell}}|<2\%)}{L}\times100\%,\\
 C_{2}&=\frac{\sum_{\ell=1}^{L}|\frac{Val_{\ell}-MC_{\ell}}{MC_{\ell}}|}{L}\times100\%,
\end{split}
\end{equation}
where $I(A)$ is the indicator function of the event $A$, equal to $1$ is $A$ is true, and zero otherwise.

From the above definitions of $C_{1}$ and $C_{2}$, we can see that one wants a $C_{1}$ which is a close to 100\% as possible, and a $C_{2}$ which is as close to 0\% as possible. The values of these $C_1,C_2$ are collected in Tables 3 thru 6 below. As can be seen there, our Theorem \ref{p3}  gives highly accurate basket prices. The $C_1$'s indicates that $100\%$ of our approximations are within 2 percentage points of the benchmark values, and the $C_2$'s show that our mean absolute percentage error sizes are on the order of one percentage point or slightly less, typically, with the accuracy improving to less than a tenth of a percentage point in Scenario 3 for all mixing distributions. We note that Scenario 3 is the only case with exclusively positive basket weights. We suspect that our formulas perform better in such cases. The overall mean absolute percentage of error of our method for all cases considered in Tables 3 thru 6 is 0.56\%. These numerical tests are encouraging for the accuracy of our formulas in Theorem  \ref{p3}, since this is consistent across many cases with various parameters, weights, and mixing distributions.

\begin{table}[htbp]
\footnotesize
\centering
\caption{The performance of our method, scenario 1}
\begin{tabular}{c|cc|cc|cc}
  \hline
  & \multicolumn{2}{c}{$\bm{Y=Exp(1)}$} & \multicolumn{2}{|c}{$\bm{Y=\Gamma(2,2)}$} & \multicolumn{2}{|c}{ $\bm{Y=IG(1,2)}$}\\
  \hline  
  & \textbf{MC}  & \textbf{Ours} & \textbf{MC}  & \textbf{Ours} & \textbf{MC}  & \textbf{Ours} \\

  0.8 & 9.3540(0.0064)  & 9.4214(0.72\%)  & 9.7012(0.0057)  & 9.7275(0.27\%)  & 9.7601(0.0057)  & 9.8083(0.49\%)  \\

  0.9 & 8.3827(0.0062)  & 8.4529(0.84\%)  & 8.7296(0.0055)  & 8.7581(0.33\%)  & 8.7898(0.0056)  & 8.8378(0.55\%)  \\

  1.0 & 7.5417(0.0061)  & 7.6117(0.93\%)  & 7.8562(0.0054)  & 7.8858(0.38\%)  & 7.9112(0.0054)  & 7.9579(0.59\%) \\

  1.1 & 6.8105(0.0059)  & 6.8780(0.99\%)  & 7.0747(0.0052)  & 7.1043(0.42\%)  & 7.1194(0.0052)  & 7.1639(0.62\%) \\

  1.2 & 6.1717(0.0058)  & 6.2353(1.03\%)  & 6.3771(0.0051)  & 6.4060(0.45\%)  & 6.4085(0.0051)  & 6.4502(0.65\%) \\
  \hline
  $C_{1}$ & \multicolumn{2}{c}{100.00\%}  & \multicolumn{2}{|c}{100.00\%} & \multicolumn{2}{|c}{100.00\%} \\

  $C_{2}$ & \multicolumn{2}{c}{0.90\%}    &  \multicolumn{2}{|c}{0.37\%}   &  \multicolumn{2}{|c}{0.58\%}  \\
  \hline
%
%

\end{tabular}
\begin{tabular}{l}
Notes: Parameters are given in Table 1. The results of Monte Carlo with the standard error in the bracket\\
are listed in the $MC$ column. The results of our method  and the absolute percentage error are displayed\\
in the $Ours$ column. Row $C_{1}$ is the percentage of "good price" and row $C_{2}$ represents the mean absolute\\
percentage error.\\
\hline
\end{tabular}
\end{table}

\begin{table}[htbp]
\footnotesize
\centering
\caption{The performance of our method, scenario 2}
\begin{tabular}{c|cc|cc|cc}
  \hline
  & \multicolumn{2}{c}{$\bm{Y=Exp(1)}$} & \multicolumn{2}{|c}{$\bm{Y=\Gamma(2,2)}$} & \multicolumn{2}{|c}{ $\bm{Y=IG(1,2)}$}\\
  \hline
  & \textbf{MC}  & \textbf{Ours} & \textbf{MC}  & \textbf{Ours} & \textbf{MC}  & \textbf{Ours} \\

  0.8 & 10.1565(0.0061)  & 10.1627(0.06\%)  & 10.8574(0.0060)  & 10.9906(1.23\%)  & 11.0131(0.0059)  & 11.1013(0.80\%)  \\

  0.9 & 12.2973(0.0066)  & 12.3898(0.75\%)  & 13.0688(0.0065)  & 13.2499(1.39\%)  & 13.2423(0.0064)  & 13.3770(1.02\%)  \\

  1.0 & 14.8167(0.0070)  & 14.9907(1.17\%)  & 15.5660(0.0070)  & 15.7861(1.41\%)  & 15.7384(0.0070)  & 15.9116(1.10\%) \\

  1.1 & 17.6883(0.0075)  & 17.9198(1.31\%)  & 18.3386(0.0074)  & 18.5865(1.35\%)  & 18.4918(0.0075)  & 18.6949(1.10\%) \\

  1.2 & 20.8524(0.0079)  & 21.1214(1.29\%)  & 21.3661(0.0079)  & 21.6310(1.24\%)  & 21.4880(0.0079)  & 21.7121(1.04\%) \\
  \hline
  $C_{1}$ & \multicolumn{2}{c}{100.00\%}  & \multicolumn{2}{|c}{100.00\%} & \multicolumn{2}{|c}{100.00\%} \\

  $C_{2}$ & \multicolumn{2}{c}{0.92\%}    &  \multicolumn{2}{|c}{1.32\%}   &  \multicolumn{2}{|c}{1.01\%}  \\
  \hline

\end{tabular}
\begin{tabular}{l}
Notes: Parameters are given in Table 1. The results of Monte Carlo with the standard error in the bracket are\\
listed in the $MC$ column. The results of our method  and the absolute percentage error are displayed in the $Ours$\\
column. Row $C_{1}$ is the percentage of "good price" and row $C_{2}$ represents the mean absolute percentage error.\\
\hline
\end{tabular}
\end{table}

\begin{table}[htbp]
\footnotesize
\centering
\caption{The performance of our method, scenario 3}
\begin{tabular}{c|cc|cc|cc}
  \hline
  & \multicolumn{2}{c}{$\bm{Y=Exp(1)}$} & \multicolumn{2}{|c}{$\bm{Y=\Gamma(2,2)}$} & \multicolumn{2}{|c}{ $\bm{Y=IG(1,2)}$}\\
  \hline
  & \textbf{MC}  & \textbf{Ours} & \textbf{MC}  & \textbf{Ours} & \textbf{MC}  & \textbf{Ours} \\

  0.8 & 25.2992(0.0090)  & 25.2967(0.01\%)  & 25.4051(0.0086)  & 25.3848(0.08\%)  & 25.3672(0.0086)  & 25.3714(0.02\%)  \\

  0.9 & 17.4806(0.0085)  & 17.4779(0.02\%)  & 17.8465(0.0079)  & 17.8327(0.08\%)  & 17.8799(0.0079)  & 17.8857(0.03\%)  \\

  1.0 & 11.4667(0.0078)  & 11.4657(0.01\%)  & 12.0070(0.0071)  & 11.9987(0.07\%)  & 12.0898(0.0071)  & 12.0973(0.06\%) \\

  1.1 & 7.6897(0.0070)  & 7.6919(0.03\%)  & 7.9797(0.0062)  & 7.9744(0.07\%)  & 8.0080(0.0062)  & 8.0186(0.13\%) \\

  1.2 & 5.3455(0.0062)  & 5.3512(0.11\%)  & 5.3472(0.0054)  & 5.3437(0.07\%)  & 5.3073(0.0054)  & 5.3188(0.22\%) \\
  \hline
  $C_{1}$ & \multicolumn{2}{c}{100.00\%}  & \multicolumn{2}{|c}{100.00\%} & \multicolumn{2}{|c}{100.00\%} \\

  $C_{2}$ & \multicolumn{2}{c}{0.03\%}    &  \multicolumn{2}{|c}{0.07\%}   &  \multicolumn{2}{|c}{0.09\%}  \\
  \hline

\end{tabular}
\begin{tabular}{l}
Notes: Parameters are given in Table 1. The results of Monte Carlo with the standard error in the bracket are\\
listed in the $MC$ column. The results of our method  and the absolute percentage error are displayed in the $Ours$\\
column. Row $C_{1}$ is the percentage of "good price" and row $C_{2}$ represents the mean absolute percentage error.\\
\hline
\end{tabular}
\end{table}

\begin{table}[htbp]
\footnotesize
\centering
\caption{The performance of our method, scenario 4-6}
\begin{tabular}{c|cc|cc|cc}
  \hline
  & \multicolumn{2}{c}{$\bm{Y=Exp(1)}$} & \multicolumn{2}{|c}{$\bm{Y=\Gamma(2,2)}$} & \multicolumn{2}{|c}{ $\bm{Y=IG(1,2)}$}\\
  \hline
  & \textbf{MC}  & \textbf{Ours} & \textbf{MC}  & \textbf{Ours} & \textbf{MC}  & \textbf{Ours} \\

  sce4 & 1.1595(0.0013)  & 1.1473(1.05\%)  & 1.1457(0.0012)  & 1.1438(0.16\%)  & 1.1310(0.0012)  & 1.1279(0.27\%) \\

  sce5 & 6.7895(0.0029)  & 6.8238(0.51\%)  & 7.1012(0.0029)  & 7.1307(0.42\%)  & 7.1661(0.0029)  & 7.1926(0.37\%) \\

  sce6 & 8.9799(0.0062)  & 9.0029(0.26\%)  & 9.3498(0.0056)  & 9.3764(0.28\%)  & 9.4288(0.0056)  & 9.4512(0.24\%) \\
  
  \hline
  $C_{1}$ & \multicolumn{2}{c}{100.00\%}  & \multicolumn{2}{|c}{100.00\%} & \multicolumn{2}{|c}{100.00\%} \\

  $C_{2}$ & \multicolumn{2}{c}{0.60\%}    &  \multicolumn{2}{|c}{0.29\%}   &  \multicolumn{2}{|c}{0.29\%}  \\
  \hline

\end{tabular}
\begin{tabular}{l}
Notes: Parameters are given in Table 1. The results of Monte Carlo with the standard error in the bracket\\
are listed in the $MC$ column. The results of our method  and the absolute percentage error are displayed\\
in the $Ours$ column. Row $C_{1}$ is the percentage of "good price" and row $C_{2}$ represents the mean absolute\\
percentage error.\\
\hline
\end{tabular}
\end{table}

\newpage

\section{Conclusions}
In this paper, we first give a closed-form
solution for the system of non-linear equations in Borovkova et al.\cite{Borovkova} that needs to be solved to obtain approximate values for basket option prices. To do this, we first transform this system of non-linear equations into a unary cubic equation and then apply the Cardano's formula to get the single real root of this unary cubic equation. We then use this solution to write closed-form approximate basket prices in an explicit  functional form, depending on the mean, variance, and skewness of the basket. This closed form approximation to the basket option price made possible by our approach, is much quicker and more efficient than the computational method proposed in Borovkova et al.\cite{Borovkova} for the same option pricing approximation. Both methods are based on the same moment-matching mathematical methodology. Numerics we perform via our method and the Monte-Carlo method as a benchmark, on a selection of secenarios, show that our formula agrees with Borovkova et al.'s to a very high degree, and that these prices are indeed excellent approximations to the true basket option prices.
 
Our second contribution is to elucidate the problem of pricing basket options when the underlying assets are modeled via time-changing Brownian motion by subordinators. We apply the matching of the first three moments approach, as in the case of log-normal models, for this class of time-changed models also. As a result, we  provide a closed form approximation formula for basket option prices for these time-changed models. Since we only match the first three moments, we are able to reduce the problem of solving the associated system of non-linear equations further and reduce it eventually to finding the root of a real valued function on the real line. Even though we only match the first three moments, our numerical tests show that the corresponding closed form approximation formula obtained performs very accurately, on a wide range of model parameters and time-change distributions, with error discrepancies on the order of 1\% or less.

\section{Appendix: Proofs}

\textbf{Proof of Lemma \ref{pp2}}: We divide into cases.

Case 1: If $\eta_{B}(T)>0$, then $c=1$ and
\begin{equation}\label{577}
\begin{split}
 &M_{1}=e^{\frac{1}{2}s^{2}+m}+\tau,\\
 &M_{2}=e^{2s^{2}+2m}+2\tau e^{\frac{1}{2}s^{2}+m}+\tau^{2},\\
 &M_{3}=e^{\frac{9}{2}s^{2}+3m}+3\tau e^{2s^{2}+2m}+3\tau^{2}e^{\frac{1}{2}s^{2}+m}+\tau^{3}.
\end{split}
\end{equation}
From the first two equations of (\ref{577}), we  obtain
\begin{equation}\label{ab1}
\begin{split}
 M_{2}-M_{1}^{2}=e^{2s^{2}+2m}-e^{s^{2}+2m}.
\end{split}
\end{equation}
From the last two equations of (\ref{577}), we obtain
\begin{equation}\label{599}
\begin{split}
 M_{3}-M_{1}^{3}&=3\tau(e^{2s^{2}+2m}-e^{s^{2}+2m})+e^{\frac{9}{2}s^{2}+3m}-e^{\frac{3}{2}s^{2}+3m}\\
 &=3(M_{1}-e^{\frac{1}{2}s^{2}+m})(M_{2}-M_{1}^{2})+e^{\frac{9}{2}s^{2}+3m}-e^{\frac{3}{2}s^{2}+3m}\\
 &=3M_{1}M_{2}-3M_{1}^{3}-3(M_{2}-M_{1}^{2})e^{\frac{1}{2}s^{2}+m}+e^{\frac{9}{2}s^{2}+3m}-e^{\frac{3}{2}s^{2}+3m}.
\end{split}
\end{equation}
From (\ref{577}) and (\ref{599}) we obtain
\begin{equation}\label{ab2}
\begin{split}
 M_{3}-3M_{1}M_{2}+2M_{1}^{3}=e^{\frac{9}{2}s^{2}+3m}-e^{\frac{3}{2}s^{2}+3m}-3(M_{2}-M_{1}^{2})e^{\frac{1}{2}s^{2}+m}.
\end{split}
\end{equation}
Observe that the left-hand-side of (\ref{ab1}) 
is the variance of the basket $B(T)$ and the left-hand-side of  (\ref{ab2}) is the third central moment of the basket $B(T)$.

From (\ref{ab1}) we obtain
\begin{equation}\label{ab3}
\begin{split}
 e^{m}=\sqrt{\frac{M_{2}-M_{1}^{2}}{e^{2s^{2}}-e^{s^{2}}}}.
\end{split}
\end{equation}
We plug (\ref{ab3}) into (\ref{ab2}) and obtain
\begin{equation}\label{ab4}
\begin{split}
 M_{3}-3M_{1}M_{2}+2M_{1}^{3}=(\frac{M_{2}-M_{1}^{2}}{e^{2s^{2}}-e^{s^{2}}})^{\frac{3}{2}}(e^{\frac{9}{2}s^{2}}-e^{\frac{3}{2}s^{2}})-3(M_{2}-M_{1}^{2})
 (\frac{M_{2}-M_{1}^{2}}{e^{2s^{2}}-e^{s^{2}}})^{\frac{1}{2}}e^{\frac{1}{2}s^{2}}.
\end{split}
\end{equation}
If we let $x=e^{s^{2}}$ in (\ref{ab4}), the equation (\ref{ab4}) becomes
\begin{equation}\label{633}
\begin{split}
 M_{3}-3M_{1}M_{2}+2M_{1}^{3}=(\frac{M_{2}-M_{1}^{2}}{x^{2}-x})^{\frac{3}{2}}(x^{\frac{9}{2}}-x^{\frac{3}{2}})-3(M_{2}-M_{1}^{2})(\frac{M_{2}-M_{1}^{2}}{x^{2}-x})^{\frac{1}{2}}x^{\frac{1}{2}}.
\end{split}
\end{equation}
Observe here that $s\neq0$ and therefore $x>1$. The formula (\ref{633}) simplifies to
\begin{equation}
\begin{split}
 (x-1)^{\frac{1}{2}}(x+2)=\frac{M_{3}-3M_{1}M_{2}+2M_{1}^{3}}{(M_{2}-M_{1}^{2})^{\frac{3}{2}}}=\eta_{B}(T),
\end{split}
\end{equation}
which means $x$ is the real root of the unary cubic equation $x^{3}+3x^2-4-\eta_{B}^{2}(T)=0$.

To solve this cubic equation we let $y=x+1$ and reduce it to the new cubic equation $y^{3}-3y-2-\eta_{B}^{2}(T)=0$.
The discriminant of this new cubic equation is $\Delta=(1+\frac{\eta_{B}^{2}(T)}{2})^{2}-1>0$. This means that the equation should have only one real root and according to the Cardano's formula, the real root is
\begin{equation}
\begin{split}
 y=\sqrt[3]{1+\frac{1}{2}\eta_{B}^{2}(T)+\eta_{B}(T)\sqrt{1+\frac{1}{4}\eta_{B}^{2}(T)}}+\sqrt[3]{1+\frac{1}{2}\eta_{B}^{2}(T)-\eta_{B}(T)\sqrt{1+\frac{1}{4}\eta_{B}^{2}(T)}}.
\end{split}
\end{equation}
Therefore  the only real root of the initial cubic equation $x^{3}+3x^2-4-\eta_{B}^{2}(T)=0$ is
\begin{equation}\label{xx}
\begin{split}
 x=\sqrt[3]{1+\frac{1}{2}\eta_{B}^{2}(T)+\eta_{B}(T)\sqrt{1+\frac{1}{4}\eta_{B}^{2}(T)}}+\sqrt[3]{1+\frac{1}{2}\eta_{B}^{2}(T)-\eta_{B}(T)\sqrt{1+\frac{1}{4}\eta_{B}^{2}(T)}}-1.
\end{split}
\end{equation}
From $x=e^{s^2}$, we have
\begin{equation}
\begin{split}
 s=(\ln(x))^{\frac{1}{2}}.
\end{split}
\end{equation}
From (\ref{ab3}), we obtain
\begin{equation}
\begin{split}
 m=\frac{1}{2}\ln(\frac{M_{2}-M_{1}^{2}}{x(x-1)})=\frac{1}{2}\ln(\frac{\sigma^{2}_{B}(T)}{x(x-1)}).
\end{split}
\end{equation}
The parameter $\tau$ can be calculated as
\begin{equation}
\begin{split}
 \tau=M_{1}-e^{\frac{1}{2}s^{2}+m}=M_{1}-x^{\frac{1}{2}}\frac{\sigma_{B}(T)}{(x(x-1))^{\frac{1}{2}}}=\mu_{B}(T)-\frac{\sigma_{B}(T)}{(x-1)^{\frac{1}{2}}}.
\end{split}
\end{equation}

Case 2: If $\eta_{B}(T)<0$, then $c=-1$ and in this case we have
\begin{equation}
\begin{split}
 &M_{1}=-e^{\frac{1}{2}s^{2}+m}-\tau,\\
 &M_{2}=e^{2s^{2}+2m}+2\tau e^{\frac{1}{2}s^{2}+m}+\tau^{2},\\
 &M_{3}=-e^{\frac{9}{2}s^{2}+3m}-3\tau e^{2s^{2}+2m}-3\tau^{2}e^{\frac{1}{2}s^{2}+m}-\tau^{3}.
\end{split}
\end{equation}
By following the parallel calculations as in the Case 1 above we  obtain the same cubic equation $x^{3}+3x^2-4-\eta_{B}^{2}(T)=0$ and this leads us to the solution (\ref{xx}).

\vspace{0.1in}

\textbf{Proof of Theorem \ref{pp3}:} We divide into cases.

Case 1: If $c=1,K\leq\tau$, (\ref{aa1}) becomes
\begin{equation}
\begin{split}
 \hat{\Pi}&= e^{-rT}E[\left(e^{sN+m}+\tau-K\right)]\\
 &=e^{-rT}(e^{m}E[e^{sN}]+\tau-K)\\
 &=e^{-rT}(e^{m+\frac{1}{2}s^{2}}+\tau-K).
\end{split}
\end{equation}

Case 2: If $c=1,K>\tau$, (\ref{aa1}) becomes
\begin{equation}\label{ac2}
\begin{split}
 \hat{\Pi}&= e^{-rT}E[\left(e^{sN+m}+\tau-K\right)^{+}]\\
 &=e^{-rT}E[\left(e^{sN+m}+\tau-K\right)\cdot I(e^{sN+m}+\tau-K>0)]\\
 &=e^{-rT+m}E[e^{sN}\cdot I(e^{sN+m}+\tau-K>0)]-e^{-rT}(K-\tau)P(e^{sN+m}+\tau-K>0),
\end{split}
\end{equation}
where $I(\cdot)$ is the indicator function. We have
\begin{equation}\label{ac3}
\begin{split}
 E[e^{sN}\cdot I(e^{sN+m}+\tau-K>0)]&=e^{\frac{1}{2}s^{2}}P(e^{sN+s^{2}+m}+\tau-K>0)\\
 &=e^{\frac{1}{2}s^{2}}\Phi(\frac{-\ln(K-\tau)+m+s^{2}}{s}),
\end{split}
\end{equation}
and
\begin{equation}\label{ac4}
\begin{split}
 P(e^{sN+m}+\tau-K>0)=\Phi(\frac{-\ln(K-\tau)+m}{s}).
\end{split}
\end{equation}
We plug (\ref{ac3}) and (\ref{ac4})  into (\ref{ac2}) and obtain 
\begin{equation}
\begin{split}
 \hat{\Pi}&= e^{-rT+m+\frac{1}{2}s^{2}}\Phi(\frac{-\ln(K-\tau)+m+s^{2}}{s})-e^{-rT}(K-\tau)\Phi(\frac{-\ln(K-\tau)+m}{s}).
\end{split}
\end{equation}

Case 3: If $c=-1,K\geq-\tau$, (\ref{aa1}) becomes
\begin{equation}
\begin{split}
 \hat{\Pi}= e^{-rT}E[\left(-e^{sN+m}-\tau-K\right)^{+}]=0.
\end{split}
\end{equation}

Case 4: If $c=-1,K<-\tau$, (\ref{aa1}) becomes
\begin{equation}\label{ac5}
\begin{split}
 \hat{\Pi}&= e^{-rT}E[\left(-e^{sN+m}-\tau-K\right)^{+}]\\
 &=e^{-rT}E[\left(-e^{sN+m}-\tau-K\right)\cdot I(-e^{sN+m}-\tau-K>0)]\\
 &=-e^{-rT+m}E[e^{sN}\cdot I(-e^{sN+m}-\tau-K>0)]\\
 &+e^{-rT}(-K-\tau)P(-e^{sN+m}-\tau-K>0),
\end{split}
\end{equation}
where $I(\cdot)$ is the indicator function. We have
\begin{equation}\label{ac6}
\begin{split}
 E[e^{sN}\cdot I(-e^{sN+m}-\tau-K>0)]&=e^{\frac{1}{2}s^{2}}P(-e^{sN+s^{2}+m}-\tau-K>0)\\
 &=e^{\frac{1}{2}s^{2}}\Phi(\frac{\ln(-K-\tau)-m-s^{2}}{s}),
\end{split}
\end{equation}
and
\begin{equation}\label{ac7}
\begin{split}
 P(-e^{sN+m}-\tau-K>0)=\Phi(\frac{\ln(-K-\tau)-m}{s}).
\end{split}
\end{equation}
Plugging (\ref{ac6}) and (\ref{ac7}) back into (\ref{ac5}), we obtain
\begin{equation}
\begin{split}
 \hat{\Pi}&=-e^{-rT+m+\frac{1}{2}s^{2}}\Phi(\frac{\ln(-K-\tau)-m-s^{2}}{s})+e^{-rT}(-K-\tau)\Phi(\frac{\ln(-K-\tau)-m}{s}).
\end{split}
\end{equation}
Finally, we let 
\begin{equation}
\begin{split}
 d_{11}&=\frac{-\ln(K-\tau)+m+s^{2}}{s},\\
 d_{12}&=\frac{-\ln(K-\tau)+m}{s},\\
 d_{21}&=\frac{\ln(-K-\tau)-m-s^{2}}{s},\\
 d_{22}&=\frac{\ln(-K-\tau)-m}{s},
\end{split}
\end{equation}
and obtain  (\ref{pp3_1}).

\vspace{0.1in}

\textbf{Proof of Proposition \ref{th1}:}
We divide into cases.

Case 1: When $c=1,K\leq\tau$, we have
\begin{equation}
\begin{split}
 \hat{\Pi}&= e^{-rT}(e^{m+\frac{1}{2}s^{2}}+\tau-K)\\
 &=e^{-rT}(\frac{\sigma_{B}(T)}{\sqrt{x(x-1)}}\sqrt{x}+\mu_B(T)-\frac{\sigma_{B}(T)}{\sqrt{x-1}}-K)\\
 &=e^{-rT}(\mu_B(T)-K).
\end{split}
\end{equation}
Therefore
\begin{equation}
\begin{split}
 \frac{\partial\hat{\Pi}}{\partial \mu_B(T)}=e^{-rT},\quad
 \frac{\partial\hat{\Pi}}{\partial\sigma_{B}(T)}=0,\quad
 \frac{\partial\hat{\Pi}}{\partial\eta_{B}(T)}=0.
\end{split}
\end{equation}

Case 2: When $c=1,K>\tau$, we have
\begin{equation}
\begin{split}
 \hat{\Pi}&= e^{-rT+m+\frac{1}{2}s^{2}}\Phi(d_{11})-e^{-rT}(K-\tau)\Phi(d_{12})\\
 &=e^{-rT}\frac{\sigma_{B}(T)}{\sqrt{x(x-1)}}\sqrt{x}\Phi(d_{11})-e^{-rT}(K-\mu_{B}(T)+\frac{\sigma_{B}(T)}{\sqrt{x-1}})\Phi(d_{12})\\
 &=e^{-rT}\frac{\sigma_{B}(T)}{\sqrt{x-1}}\Phi(d_{11})-e^{-rT}(K-\mu_{B}(T)+\frac{\sigma_{B}(T)}{\sqrt{x-1}})\Phi(d_{12}),
\end{split}
\end{equation}
where
\begin{equation}
\begin{split}
 d_{11}&=\frac{-\ln(K-\tau)+m+s^{2}}{s}\\
 &=\frac{-\ln(K-\mu_{B}(T)+\frac{\sigma_{B}(T)}{\sqrt{x-1}})+\ln(\sigma_{B}(T))-\frac{1}{2}\ln(x(x-1))}{(\ln(x))^{\frac{1}{2}}}+(\ln(x))^{\frac{1}{2}},\\
 d_{12}&=\frac{-\ln(K-\tau)+m}{s}\\
 &=\frac{-\ln(K-\mu_{B}(T)+\frac{\sigma_{B}(T)}{\sqrt{x-1}})+\ln(\sigma_{B}(T))-\frac{1}{2}\ln(x(x-1))}{(\ln(x))^{\frac{1}{2}}}.
\end{split}
\end{equation}
We have the following  relation between $d_{11}$ and $d_{12}$, which we will apply several times in the proof,
\begin{equation}\label{new2}
\begin{split}
 \phi(d_{11})&=\frac{1}{\sqrt{2\pi}}e^{-\frac{d_{11}^{2}}{2}}\\
 &=\frac{1}{\sqrt{2\pi}}e^{-\frac{d_{12}^{2}}{2}}e^{-d_{12}(\ln(x))^{\frac{1}{2}}-\frac{1}{2}\ln(x)}\\
 &=\phi(d_{12})e^{\ln(K-\mu_{B}(T)+\frac{\sigma_{B}(T)}{\sqrt{x-1}})-\ln(\sigma_{B}(T))+\frac{1}{2}\ln(x-1)}\\
 &=\phi(d_{12})(K-\mu_{B}(T)+\frac{\sigma_{B}(T)}{\sqrt{x-1}})\frac{\sqrt{x-1}}{\sigma_{B}(T)}.
\end{split}
\end{equation}
Since
\begin{equation}
\begin{split}
 \frac{\partial d_{11}}{\partial \mu_{B}(T)}=\frac{\partial d_{12}}{\partial \mu_{B}(T)}=\frac{1}{(K-\mu_{B}(T)+\frac{\sigma_{B}(T)}{\sqrt{x-1}})(\ln(x))^{\frac{1}{2}}}
\end{split}
\end{equation}
we have
\begin{equation}\label{888}
\begin{split}
 \frac{\partial\hat{\Pi}}{\partial \mu_{B}(T)}&=e^{-rT}\frac{\sigma_{B}(T)}{\sqrt{x-1}}\phi(d_{11})\frac{\partial d_{11}}{\partial \mu_{B}(T)}\\
 &+e^{-rT}\Phi(d_{12})-e^{-rT}(K-\mu_{B}(T)+\frac{\sigma_{B}(T)}{\sqrt{x-1}})\phi(d_{12})\frac{\partial d_{12}}{\partial \mu_{B}(T)}\\
 &=e^{-rT}\Phi(d_{12}),
\end{split}
\end{equation}
where we plugged in the relation (\ref{new2}) to get the second equation of (\ref{888}). Next, since
\begin{equation}
\begin{split}
 \frac{\partial d_{11}}{\partial\sigma_{B}(T)}=\frac{\partial d_{12}}{\partial\sigma_{B}(T)}=\frac{\sqrt{x-1}(K-\mu_{B}(T))}{\sigma_{B}(T)(\sigma_{B}(T)+\sqrt{x-1}(K-\mu_{B}(T)))(\ln(x))^{\frac{1}{2}}},
\end{split}
\end{equation}
we have 
\begin{equation}
\begin{split}
 \frac{\partial\hat{\Pi}}{\partial\sigma_{B}(T)}&=\frac{e^{-rT}}{\sqrt{x-1}}(\Phi(d_{11})-\Phi(d_{12}))+e^{-rT}\frac{\sigma_{B}(T)}{\sqrt{x-1}}\phi(d_{11})\frac{\partial d_{11}}{\partial\sigma_{B}(T)}\\
 &-e^{-rT}(K-\mu_{B}(T)+\frac{\sigma_{B}(T)}{\sqrt{x-1}})\phi(d_{12})\frac{\partial d_{12}}{\partial\sigma_{B}(T)}\\
 &=\frac{e^{-rT}}{\sqrt{x-1}}(\Phi(d_{11})-\Phi(d_{12})),
\end{split}
\end{equation}
where we applied the equation (\ref{new2}) to get the last equation as well. To get the  derivative with respect to the  skewness, we first calculate the following
\begin{equation}
\begin{split}
 \frac{\partial d_{11}}{\partial x}&=\frac{\partial d_{12}}{\partial x}+\frac{1}{2x(\ln(x))^{\frac{1}{2}}}\\
 &=\frac{1}{2(\ln(x))^{\frac{1}{2}}(x-1)}(\frac{\sigma_{B}(T)}{(K-\mu_{B}(T)+\frac{\sigma_{B}(T)}{\sqrt{x-1}})\sqrt{x-1}}-\frac{2x-1}{x})\\
 &+\frac{1}{2x(\ln(x))^{\frac{3}{2}}}(\ln(K-\mu_{B}(T)+\frac{\sigma_{B}(T)}{\sqrt{x-1}})-\ln(\sigma_{B}(T))+\frac{1}{2}\ln(x(x-1)))+\frac{1}{2x(\ln(x))^{\frac{1}{2}}}.
\end{split}
\end{equation}
Finally we have
\begin{equation}
\begin{split}
 \frac{\partial\hat{\Pi}}{\partial\eta_{B}(T)}&=-\frac{1}{2}e^{-rT}\frac{\sigma_{B}(T)}{(x-1)^{\frac{3}{2}}}(\Phi(d_{11})-\Phi(d_{12}))\frac{\partial x}{\partial\eta_{B}(T)}\\
 &+e^{-rT}\frac{\sigma_{B}(T)}{\sqrt{x-1}}\phi(d_{11})\frac{\partial d_{11}}{\partial x}\frac{\partial x}{\partial\eta_{B}(T)}-e^{-rT}(K-\mu_{B}(T)+\frac{\sigma_{B}(T)}{\sqrt{x-1}})\phi(d_{12})\frac{\partial d_{12}}{\partial x}\frac{\partial x}{\partial\eta_{B}(T)}\\
 &=-\frac{1}{2}e^{-rT}\frac{\sigma_{B}(T)}{(x-1)^{\frac{3}{2}}}(\Phi(d_{11})-\Phi(d_{12}))\frac{\partial x}{\partial\eta_{B}(T)}+\frac{e^{-rT}\sigma_{B}(T)}{2x\sqrt{(x-1)\ln(x)}}\phi(d_{11})\frac{\partial x}{\partial\eta_{B}(T)}\\
 &+e^{-rT}\frac{\sigma_{B}(T)}{\sqrt{x-1}}\phi(d_{11})\frac{\partial d_{12}}{\partial x}\frac{\partial x}{\partial\eta_{B}(T)}-e^{-rT}(K-\mu_{B}(T)+\frac{\sigma_{B}(T)}{\sqrt{x-1}})\phi(d_{12})\frac{\partial d_{12}}{\partial x}\frac{\partial x}{\partial\eta_{B}(T)}\\
 &=(-\frac{1}{2}e^{-rT}\frac{\sigma_{B}(T)}{(x-1)^{\frac{3}{2}}}(\Phi(d_{11})-\Phi(d_{12}))+\frac{e^{-rT}\sigma_{B}(T)}{2x\sqrt{(x-1)\ln(x)}}\phi(d_{11}))\frac{\partial x}{\partial\eta_{B}(T)}\\
 &=\frac{e^{-rT}\sigma_{B}(T)}{2(x-1)^{\frac{1}{2}}}(\frac{-\Phi(d_{11})+\Phi(d_{12})}{x-1}+\frac{\phi(d_{11})}{x(\ln(x))^{\frac{1}{2}}})\frac{\partial x}{\partial\eta_{B}(T)},
\end{split}
\end{equation}
where
\begin{equation}
\begin{split}
 \frac{\partial x}{\partial\eta_{B}(T)}&=\frac{1}{3}(1+\frac{1}{2}\eta_{B}^{2}(T)+\eta_{B}(T)(1+\frac{1}{4}\eta_{B}^{2}(T))^{\frac{1}{2}})^{-\frac{2}{3}}(\eta_{B}(T)+(1+\frac{1}{2}\eta_{B}^{2}(T))(1+\frac{1}{4}\eta_{B}^{2}(T))^{-\frac{1}{2}})\\
 &+\frac{1}{3}(1+\frac{1}{2}\eta_{B}^{2}(T)-\eta_{B}(T)(1+\frac{1}{4}\eta_{B}^{2}(T))^{\frac{1}{2}})^{-\frac{2}{3}}(\eta_{B}(T)-(1+\frac{1}{2}\eta_{B}^{2}(T))(1+\frac{1}{4}\eta_{B}^{2}(T))^{-\frac{1}{2}}).
\end{split}
\end{equation}

Case 3 : .When $c=-1,K\geq-\tau$, we have
\begin{equation}
\begin{split}
 \frac{\partial\hat{\Pi}}{\partial \mu_{B}(T)}=\frac{\partial\hat{\Pi}}{\partial\sigma_{B}(T)}=\frac{\partial\hat{\Pi}}{\partial\eta_{B}(T)}=0.
\end{split}
\end{equation}

Case 4: When $c=-1,K<-\tau$, we have
\begin{equation}
\begin{split}
 \hat{\Pi}&\approx-e^{-rT+m+\frac{1}{2}s^{2}}\Phi(d_{21})+e^{-rT}(-K-\tau)\Phi(d_{22})\\
 &=-e^{-rT}\frac{\sigma_{B}(T)}{\sqrt{x(x-1)}}\sqrt{x}\Phi(d_{21})+e^{-rT}(-K+\mu_{B}(T)+\frac{\sigma_{B}(T)}{\sqrt{x-1}})\Phi(d_{22})\\
 &=-e^{-rT}\frac{\sigma_{B}(T)}{\sqrt{x-1}}\Phi(d_{21})+e^{-rT}(-K+\mu_{B}(T)+\frac{\sigma_{B}(T)}{\sqrt{x-1}})\Phi(d_{22})
\end{split}
\end{equation}
where
\begin{equation}
\begin{split}
 d_{21}&=\frac{\ln(-K-\tau)-m-s^{2}}{s}\\
 &=\frac{\ln(-K+\mu_{B}(T)+\frac{\sigma_{B}(T)}{\sqrt{x-1}})-\ln(\sigma_{B}(T))+\frac{1}{2}\ln(x(x-1))}{(\ln(x))^{\frac{1}{2}}}-(\ln(x))^{\frac{1}{2}},\\
 d_{22}&=\frac{\ln(-K-\tau)-m}{s}\\
 &=\frac{\ln(-K+\mu_{B}(T)+\frac{\sigma_{B}(T)}{\sqrt{x-1}})-\ln(\sigma_{B}(T))+\frac{1}{2}\ln(x(x-1))}{(\ln(x))^{\frac{1}{2}}}.
\end{split}
\end{equation}
The calculations are parallel with the Case 2 above.

\vspace{0.1in}

\textbf{Proof of Lemma \ref{ppp1}:}

The proof is similar to the proof of Lemma \ref{pp2}. We divide into cases.

Case 1: If $\eta_{B}(T)>0$, then $c=1$ and we have
\begin{equation}
\begin{split}
 &M_{1}(T)=e^{m}\phi_{Y_{T}}(\frac{1}{2}s^{2})+\tau,\\
 &M_{2}(T)=e^{2m}\phi_{Y_{T}}(2s^{2})+2\tau e^{m}\phi_{Y_{T}}(\frac{1}{2}s^{2})+\tau^{2},\\
 &M_{3}(T)=e^{3m}\phi_{Y_{T}}(\frac{9}{2}s^{2})+3\tau e^{2m}\phi_{Y_{T}}(2s^{2})+3\tau^{2}e^{m}\phi_{Y_{T}}(\frac{1}{2}s^{2})+\tau^{3}.
\end{split}
\end{equation}
From the above equations, we  get
\begin{equation}
\begin{split}
 M_{2}(T)-M_{1}(T)^{2}=e^{2m}(\phi_{Y_{T}}(2s^{2})-(\phi_{Y_{T}}(\frac{1}{2}s^{2}))^{2})
\end{split}
\end{equation}
and
\begin{equation}
\begin{split}
 M_{3}(T)-&M_{1}(T)^{3}=3\tau e^{2m}(\phi_{Y_{T}}(2s^{2})-(\phi_{Y_{T}}(\frac{1}{2}s^{2}))^{2})+e^{3m}(\phi_{Y_{T}}(\frac{9}{2}s^{2})-(\phi_{Y_{T}}(\frac{1}{2}s^{2}))^{3})\\
 &=3(M_{1}(T)-e^{m}\phi_{Y_{T}}(\frac{1}{2}s^{2}))(M_{2}(T)-M_{1}(T)^{2})+e^{3m}(\phi_{Y_{T}}(\frac{9}{2}s^{2})-(\phi_{Y_{T}}(\frac{1}{2}s^{2}))^{3}),
\end{split}
\end{equation}
which gives
\begin{equation}\label{ppp1_1}
\begin{split}
 M_{3}(T)-&3M_{1}(T)M_{2}(T)+2M_{1}(T)^{3}=\\
 &e^{3m}(\phi_{Y_{T}}(\frac{9}{2}s^{2})-(\phi_{Y_{T}}(\frac{1}{2}s^{2}))^{3})-3e^{m}\phi_{Y_{T}}(\frac{1}{2}s^{2})(M_{2}(T)-M_{1}(T)^{2}).
\end{split}
\end{equation}
If we plug
\begin{equation}
\begin{split}
 e^{m}=\sqrt{\frac{M_{2}(T)-M_{1}(T)^{2}}{\phi_{Y_{T}}(2s^{2})-(\phi_{Y_{T}}(\frac{1}{2}s^{2}))^{2}}}
\end{split}
\end{equation}
into (\ref{ppp1_1}), we obtain
\begin{equation}
\begin{split}
 \frac{\phi_{Y_{T}}(\frac{9}{2}s^{2})-3\phi_{Y_{T}}(\frac{1}{2}s^{2})\phi_{Y_{T}}(2s^{2})+2(\phi_{Y_{T}}(\frac{1}{2}s^{2}))^{3}}{(\phi_{Y_{T}}(2s^{2})-(\phi_{Y_{T}}(\frac{1}{2}s^{2}))^{2})^{\frac{3}{2}}}
 &=\frac{M_{3}(T)-3M_{1}(T)M_{2}(T)+2M_{1}(T)^{3}}{(M_{2}(T)-M_{1}(T)^{2})^{\frac{3}{2}}}\\
 &=\eta_{B}(T),
\end{split}
\end{equation}
which leads to
\begin{equation}
\begin{split}
 \phi_{Y_{T}}(\frac{9}{2}s^{2})-3\phi_{Y_{T}}(\frac{1}{2}s^{2})\phi_{Y_{T}}(2s^{2})+2(\phi_{Y_{T}}(\frac{1}{2}s^{2}))^{3}-\eta_{B}(T)(\phi_{Y_{T}}(2s^{2})-(\phi_{Y_{T}}(\frac{1}{2}s^{2}))^{2})^{\frac{3}{2}}=0.
\end{split}
\end{equation}
If we let $x=s^{2}$, then
\begin{equation}
\begin{split}
 m=\frac{1}{2}\ln(\frac{M_{2}(T)-M_{1}(T)^{2}}{\phi_{Y_{T}}(2x)-(\phi_{Y_{T}}(\frac{1}{2}x))^{2}})=\frac{1}{2}\ln(\frac{\sigma^{2}_{B}(T)}{\phi_{Y_{T}}(2x)-(\phi_{Y_{T}}(\frac{1}{2}x))^{2}}),
\end{split}
\end{equation}
and
\begin{equation}
\begin{split}
 \tau&=M_{1}(T)-e^{m}\phi_{Y_{T}}(\frac{1}{2}s^{2})\\
 &=M_{1}(T)-\sqrt{\frac{M_{2}(T)-M_{1}(T)^{2}}{\phi_{Y_{T}}(2s^{2})-(\phi_{Y_{T}}(\frac{1}{2}s^{2}))^{2}}}\phi_{Y_{T}}(\frac{1}{2}s^{2})\\
 &=\mu_{B}(T)-\frac{\phi_{Y_{T}}(\frac{1}{2}x)}{(\phi_{Y_{T}}(2x)-(\phi_{Y_{T}}(\frac{1}{2}x))^{2})^{\frac{1}{2}}}\sigma_{B}(T).
\end{split}
\end{equation}

Case 2: The case $\eta_{B}(T)<0$ can be handled similar to the above case.

\vspace{0.1in}

\textbf{Proof of Theorem \ref{p3}} We divide into cases.

Case 1:  If $K\leq\tau$, we have
\begin{equation}
\begin{split}
 \hat{\Pi}= e^{-rT}E\left[e^{s\sqrt{Y_{T}}N+m}+\tau-K\right]=e^{m}E[e^{s\sqrt{Y_{T}}N}]+\tau-K.
\end{split}
\end{equation}
We have
\begin{equation}
\begin{split}
 E[e^{s\sqrt{Y_{T}}N}]=E\left[E[e^{s\sqrt{Y_{T}}N}|Y_{T}]\right]=E[e^{\frac{1}{2}s^{2}Y_{T}}]=\phi_{Y_{T}}(\frac{1}{2}s^{2})
\end{split}
\end{equation}
where $\phi_{Y_{T}}(\cdot)$ is the moment generating function of $Y_{T}$. Therefore
\begin{equation}
\begin{split}
 \hat{\Pi}= e^{m}\phi_{Y_{T}}(\frac{1}{2}s^{2})+\tau-K.
\end{split}
\end{equation}

Case 2: If $K>\tau$, the formula becomes
\begin{equation}\label{4}
\begin{split}
 \hat{\Pi}= e^{-rT}E\left[\left(e^{s\sqrt{Y_{T}}N+m}+\tau-K\right)^{+}\right].
\end{split}
\end{equation}
We have
\begin{equation}\label{5}
\begin{split}
 &E\left[\left(e^{s\sqrt{Y_{T}}N+m}+\tau-K\right)^{+}\right]\\
 &=E\left[\left(e^{s\sqrt{Y_{T}}N+m}+\tau-K\right)\cdot I\left(e^{s\sqrt{Y_{T}}N+m}+\tau-K>0\right)\right]\\
 &=E\left[e^{s\sqrt{Y_{T}}N+m}\cdot I\left(e^{s\sqrt{Y_{T}}N+m}+\tau-K>0\right)\right]-(K-\tau)P\left(e^{s\sqrt{Y_{T}}N+m}+\tau-K>0\right),
\end{split}
\end{equation}
where $I(\cdot)$ is the indicator function. The first term of the last expression of (\ref{5}) can be calculated as follows
\begin{equation}
\begin{split}
 &E\left[e^{s\sqrt{Y_{T}}N+m}\cdot I\left(e^{s\sqrt{Y_{T}}N+m}+\tau-K>0\right)\right]\\
 =&E\left[E\left[e^{s\sqrt{Y_{T}}N+m}\cdot I\left(e^{s\sqrt{Y_{T}}N+m}+\tau-K>0\right)|Y_{T}\right]\right]\\
 =&E\left[e^{\frac{1}{2}s^{2}Y_{T}+m}E\left[e^{s\sqrt{Y_{T}}N-\frac{1}{2}s^{2}Y_{T}}\cdot I\left(e^{s\sqrt{Y_{T}}N+m}+\tau-K>0\right)|Y_{T}\right]\right]\\
 =&E\left[e^{\frac{1}{2}s^{2}Y_{T}+m}E\left[I\left(e^{s\sqrt{Y_{T}}(N+s\sqrt{Y_{T}})+m}+\tau-K>0\right)|Y_{T}\right]\right]\\
 =&E\left[e^{\frac{1}{2}s^{2}Y_{T}+m}E\left[I\left(N>\frac{\ln(K-\tau)-m-s^{2}Y_{T}}{s\sqrt{Y_{T}}}\right)|Y_{T}\right]\right]\\
 =&E\left[e^{\frac{1}{2}s^{2}Y_{T}+m}\Phi\left(\frac{-\ln(K-\tau)+m+s^{2}Y_{T}}{s\sqrt{Y_{T}}}\right)\right]. 
\end{split}
\end{equation}
The second term in the last expression of (\ref{5}) can be calculated as follows
\begin{equation}
\begin{split}
 &P\left(e^{s\sqrt{Y_{T}}N+m}+\tau-K>0\right)\\
 =&E\left[P\left(e^{s\sqrt{Y_{T}}N+m}+\tau-K>0|Y_{T}\right)\right]\\
 =&E\left[P\left(N>\frac{\ln(K-\tau)-m}{s\sqrt{Y_{T}}}|Y_{T}\right)\right]\\
 =&E\left[\Phi\left(\frac{-\ln(K-\tau)+m}{s\sqrt{Y_{T}}}\right)\right].
\end{split}
\end{equation}
We let
\begin{equation}\label{6}
\begin{split}
 d_{11}(Y_{T})&=\frac{-\ln(K-\tau)+m+s^{2}Y_{T}}{s\sqrt{Y_{T}}},\\
 d_{12}(Y_{T})&=\frac{-\ln(K-\tau)+m}{s\sqrt{Y_{T}}},
\end{split}
\end{equation}
and plug (\ref{6}) into (\ref{5}) and obtain the result.

Case 3:  If $K\geq-\tau$, since $e^{s\sqrt{Y_{T}}+m}$  and $-\tau-K$ are both non-positive, we have
\begin{equation}
\bar{\Pi}=0. 
\end{equation}

Case 4: If $K<-\tau$, we have
\begin{equation}\label{777}
\begin{split}
 \bar{\Pi}&= e^{-rT}E\left[\left(-e^{s\sqrt{Y_{T}}
 N+m}-\tau-K\right)^{+}\right]\\
 &=-e^{-rT}E\left[\left(e^{s\sqrt{Y_{T}}N+m}+\tau+K\right)^{-}\right]\\
 &=-e^{-rT}E\left[\left(e^{s\sqrt{Y_{T}}N+m}+\tau+K\right)\cdot I\left(e^{s\sqrt{Y_{T}}N+m}+\tau+K<0\right)\right]\\
 &=-e^{-rT}E\left [e^{s\sqrt{Y_{T}}N+m}\cdot I\left(e^{s\sqrt{Y_{T}}N+m}+\tau+K<0\right)\right ]\\
&-(K+\tau)e^{-rT}P\left(
e^{s\sqrt{Y_{T}}N+m}+\tau+K<0
\right ),
\end{split}
\end{equation}
where $I(\cdot)$ is the indicator function. The expectation in the last expression can be evaluated as follows
\begin{equation}
\begin{split}
 &E\left[e^{s\sqrt{Y_{T}}N+m}\cdot I\left(e^{s\sqrt{Y_{T}}N+m}+\tau+K<0\right)\right]\\
 =&E\left[E\left[e^{s\sqrt{Y_{T}}N+m}\cdot I\left(e^{s\sqrt{Y_{T}}N+m}+\tau+K<0\right)|Y_{T}\right]\right]\\
 =&E\left[e^{\frac{1}{2}s^{2}Y_{T}+m}E\left[e^{s\sqrt{Y_{T}}N-\frac{1}{2}s^{2}Y_{T}}\cdot I\left(e^{s\sqrt{Y_{T}}N+m}+\tau+K<0\right)|Y_{T}\right]\right]\\
 =&E\left[e^{\frac{1}{2}s^{2}Y_{T}+m}E\left[I\left(e^{s\sqrt{Y_{T}}(N+s\sqrt{Y_{T}})+m}+\tau+K<0\right)|Y_{T}\right]\right]\\
 =&E\left[e^{\frac{1}{2}s^{2}Y_{T}+m}E\left[I\left(N<\frac{\ln(-K-\tau)-m-s^{2}Y_{T}}{s\sqrt{Y_{T}}}\right)|Y_{T}\right]\right]\\
 =&E\left[e^{\frac{1}{2}s^{2}Y_{T}+m}\Phi\left(\frac{\ln(-K-\tau)-m-s^{2}Y_{T}}{s\sqrt{Y_{T}}}\right)\right].
\end{split}
\end{equation}
The same way, the probability in the second term of the last expression of (\ref{777}) can be simplified as follows
\begin{equation}
\begin{split}
 &P\left(e^{s\sqrt{Y_{T}}N+m}+\tau+K<0\right)\\
 =&E\left[P\left(e^{s\sqrt{Y_{T}}N+m}+\tau+K<0|Y_{T}\right)\right]\\
 =&E\left[P\left(N<\frac{\ln(-K-\tau)-m}{s\sqrt{Y_{T}}}|Y_{T}\right)\right]\\
 =&E\left[\Phi\left(\frac{\ln(-K-\tau)-m}{s\sqrt{Y_{T}}}\right)\right].
\end{split}
\end{equation}
We let
\begin{equation}\label{8}
\begin{split}
 d_{21}(Y_{T})&=\frac{\ln(-K-\tau)-m-s^{2}Y_{T}}{s\sqrt{Y_{T}}},\\
 d_{22}(Y_{T})&=\frac{\ln(-K-\tau)-m}{s\sqrt{Y_{T}}},
\end{split}
\end{equation}
and plug (\ref{8})  into (\ref{777})
and obtain (\ref{p4_3}).

\bibliographystyle{siam}
\bibliography{reference}

\end{document}